\newtheorem{claim}{Claim}
\newtheorem{observation}{Observation}
\newtheorem{corollary}{Corollary}
\newtheorem{lemma}{Lemma}
\newtheorem{proposition}{Proposition}
\newenvironment{customthm}[1]
  {\innercustomthm}
  {\endinnercustomthm}
\newenvironment{proof}[1][Proof]{\indent\textit{#1.}}{\ \rule{0.5em}{0.5em}}
\def\@biblabel#1{\hspace*{-\labelsep}}
\renewcommand\@makefnmark{\mbox{\textsuperscript{\normalfont\@thefnmark}}}
\renewcommand\@makefntext[1]{\indent\makebox[2.5em][r]{\@thefnmark.\,}#1}
  \renewenvironment{abstract}{      \titlepage
      \null\vfil
      \@beginparpenalty\@lowpenalty
      \begin{center}        \@endparpenalty\@M
      \end{center}}     {\par\vfil\null\endtitlepage}
  \renewenvironment{abstract}{      \if@twocolumn
      \else
        \small
        \begin{center}        \end{center}      \fi}
      {\if@twocolumn\else\endquotation\fi}
\renewcommand\thetable{\@Roman\c@table}
\renewcommand\thefigure{\@Roman\c@figure}
\newcommand{\E}{\mathbb{E}}
\newcommand{\V}{\mathbb{V}}
\newcommand{\DT}{{\Delta\Theta}}
\newcommand{\RP}{\mathcal{R}}
\newcommand{\real}{\mathbb{R}}
\newcommand{\dd}{\text{d}}
\newcommand{\ddd}{\text{ d}}
\newcommand{\supp}{\text{supp}}
\newcommand{\z}{z}
\newcommand{\co}{\mbox{co}}
\newcommand{\B}{{\mathcal B}}
\newcommand{\lt}{{L}}
\newcommand{\rt}{{R}}
\newcommand{\down}{{D}}
\newcommand{\longsquiggly}{\xymatrix{{}\ar@{~>}[r]&{}}}
\newcommand{\quiz}[1]{\mbox{{\fontfamily{qag}\selectfont {\normalsize #1}}}}
\newcommand{\quizT}{\quiz{true}{}}
\newcommand{\quizF}{\quiz{false}{}}
\newcommand{\quizU}{\quiz{uncertain}{}}
\DeclareMathOperator*{\argmax}{argmax}
\begin{document}
\title{\vspace{-2cm}{\bf Optimal Attention Management:\\ A Tractable Framework}\thanks{%
We would like to thank David Ahn, Nemanja Antic, 
Andreas Blume, Ben Brooks, Andrew Caplin, Sylvain Chassang, Piotr Dworczak, Haluk Ergin, Brett Green, Yingni Guo, Filip Mat\v{e}jka, David Pearce, Doron Ravid, Chris Shannon, and Philipp Strack for their feedback, as well as audiences at NASMES (Davis), ES Summer School (Singapore), Stony Brook Summer Festival, UC Berkeley, University of Arizona, ASU, and Northwestern University.}}
\author{%
\begin{tabular}{ccc}
\textsc{Elliot Lipnowski\footnote{e.lipnowski@columbia.edu}} & \textsc{Laurent Mathevet\footnote{laurent.mathevet@eui.eu}} & \textsc{Dong Wei\footnote{dwei10@ucsc.edu}}\\
{\small Columbia University} & {\small European University Institute} & {\small UC Santa Cruz}%
\end{tabular}
}
\maketitle
\begin{abstract}
\begin{changemargin}{1cm}{1cm} \begin{spacing}{1}
\noindent {\sc Abstract}. 
A well-intentioned principal provides information to a rationally inattentive agent without internalizing the agent's cost of processing information.  Whatever information the principal makes available, the agent may choose to ignore some. We study optimal information provision in a tractable model with quadratic payoffs where full disclosure is not optimal.
We characterize incentive-compatible information policies, that is, those to which the agent willingly pays full attention. In a leading example with three states, optimal disclosure involves information distortion at intermediate costs of attention.  As the cost increases, optimal information changes from downplaying the state to exaggerating the state.\vspace{0.2cm}\\
\noindent {\it Keywords}: information disclosure, rational inattention, costly information processing, paternalistic information design.\vspace{0.2cm}\\
{\it JEL Codes}: D82, D83, D91.
\end{spacing}
\end{changemargin}
\end{abstract}

\newpage 

\begin{spacing}{1.1}
	\section{Introduction}	
	{\large Information is intended to benefit its recipient by allowing better decision making. Meanwhile, learning or information processing can be costly. These costs can cause the recipient to ignore some available information. 
	This motivated \cite{Simon1971,Simon1996} to call for the design of ``intelligent information-filtering systems.''
	
	In our recent paper \citep[``LMW'' hereafter]{Lipnowski2020}, we propose a theoretical framework to study the information-filtering problem aimed at managing a receiver's attention. Despite strong preference alignment, we show that a benevolent sender can have incentives to withhold information in order to induce an agent to pay ``better'' attention and make better decisions. While a necessary and sufficient condition is provided for \emph{when} information should be withheld, optimization (i.e., \emph{how} to optimally withhold information) is left open. The goal of this paper is to analyze this question in a tractable specification of LMW's model.
	
	We propose a quadratic model of a principal providing information to a rationally inattentive agent. The agent cares about his material loss, defined as the squared error between his action and the state, and also about the cost of processing information, defined as the squared Euclidean distance between his prior and posterior beliefs. 
	The principal, however, is motivated only by the agent's material welfare, as a teacher is motivated by her student's educational outcomes or a doctor by the fitness of her patient's medical decisions. Given such costs, the agent decides what information to acquire, taking whatever information the principal provides as an upper bound. Choosing said upper bound optimally is our design problem. 
	
	The key message from LMW is that attention management is fundamentally about trading off issues, that is, different dimensions of information. With two states, information can never be misused, because its sole use is to separate one state from the other; as a result, it is always optimal for the principal to provide full information. With more than two states, however, information becomes multidimensional. Intuitively, the agent needs to decide which aspects of the state to learn about, while at the same time determining the total amount of information to acquire. The principal may then strictly prefer to withhold information to restrict the agent's flexibility in garbling---though her sole objective is to help the agent make good decisions---so as to focus the agent on the most relevant issues.

	The quadratic model in the present paper is arguably the simplest available model with more than two states, because the agent's preferences over information are constant in a given ``direction.'' For this reason, the model can isolate the new forces that arise from multi-issue attention management. To study optimal disclosure, we first characterize (in Proposition \ref{prop:IC}) the information policies to which the agent willingly pays full attention. We find that incentive compatibility of distinguishing any given pair of messages in a policy can be determined by simple comparison of its marginal cost and benefit, and that pairwise incentive compatibility between messages implies that the information policy is incentive compatible.

\begin{spacing}{1.1}
	We apply our characterization of agent incentives to solve for the optimal attention management policy in a canonical example with three states. We explicitly derive the optimal policy for arbitrary prior beliefs and cost parameters, showing that it always takes a friendly form: The principal either downplays or exaggerates the state (Proposition \ref{prop:optimal}). In this example, we can interpret the principal as a teacher who hopes to maximize her student's performance on an (externally administered) exam in which the answer $\theta$ to each question is \quizT{} (1), \quizF{} (-1) or \quizU{} (0). The student can provide an explanation for his response, so that his action $a$ can be any number between -1 and 1, and his score, e.g., $4-(a-\theta)^2$, depends on how close his answer is to the correct one. At intermediate attention costs, partial disclosure is optimal even though the teacher wants the student to learn as much as possible. Specifically, when attention is relatively cheap, the teacher should present some of those questions with extreme answers (i.e., \quizT{} or \quizF{}) as uncertain (downplaying); when attention is relatively costly, the teacher should only emphasize the positive or negative side of those \quizU{} questions (exaggerating). By carefully withholding information, the teacher incentivizes the student to pay more attention to the available and more relevant aspects, leading to even better decision making.
	\bigskip
\end{spacing}
}

{\large 
\noindent\textbf{Related Literature.}
Our paper lies at the interface of two literatures: persuasion through flexible information \citep{Kamenica2011} and rational inattention \citep{Sims1998,Sims2003}.
The most related works, featuring well-intentioned information transmission with a flexibly inattentive audience, are those of \cite{Lipnowski2020} and \cite{Bloedel2018}.\footnote{In addition, \cite{Lester2012} analyze evidence exclusion in courts of law, where a judge chooses which of the finite pieces of evidence should be considered by the jury who then choose a subset of those to examine at a cost. They provide examples in which evidence exclusion leads to fewer sentencing errors. We study the same basic tradeoff in a flexible-information framework. 
	\cite{Wei2018} applies our framework to misaligned preferences, studying a buyer-seller setting.} 

In \cite{Lipnowski2020}, we propose the theoretical framework used in this paper, and find a necessary and sufficient condition for \emph{when} attention management is useful. The present paper studies \emph{how} to optimally manage attention, by specializing the framework to a tractable payoff environment. 

\begin{spacing}{1.1}
Our exercise is closest in spirit to that of \cite{Bloedel2018}, which studies a similar problem with substantially different modeling assumptions. Most notably, the qualitatively different cost specifications make the respective analyses relevant to distinct applications.\footnote{In addition to aligned preferences, \cite{Bloedel2018} also study the case of misaligned preferences, which lend the principal two reasons to withhold information: to bias the agent's decision, and to manipulate his attention.} When information is provided by another party (the principal), the agent's uncertainty could be about the realized message or about the underlying state of the world. In the model of \cite{Bloedel2018}, the agent bears an (entropy-based) cost to learn what message the principal has sent. In our model, following our previous paper, the agent bears a (quadratic) cost to learn the state. Thus, their cost specification is a cost of deciphering direct communication, while ours is better-suited to capture the cost of processing available data.
\end{spacing}
}
	
\section{The Model}\label{sec:2}
	{\large 
		We study the exact setting of LMW, specialized to a quadratic-payoff specification. Let $\Theta\subseteq \mathbb{R}$ be a finite set of states and $A$ be its convex hull. An agent must make a decision $a\in A$ in a world with uncertain state $\theta \in \Theta$ distributed according to a full-support prior distribution $\mu \in \Delta \Theta$.  When the agent chooses $a$ in state $\theta$, his material payoff is given by $u(a, \theta):=-(a-\theta)^2$. The principal's payoff is equal to the agent's material utility, $u$.
		
		In addition to his material utility, the agent also incurs an attention cost. As in the rational inattention literature, this cost is interpreted as the utility loss from processing information. To define it, first let
		$$
		\mathcal{R}\left(\mu\right):=\left\{p\in\Delta\Delta\Theta: \int_{\Delta\Theta}\nu\ddd p(\nu)=\mu\right\}
		$$
		be the set of \textbf{(information) policies}, which are the distributions over the agent's beliefs such that the mean equals the prior. It is well-known that signal structures and information policies are equivalent formalisms \citep[e.g.,][]{Kamenica2011}. 
		For a given cost parameter $\kappa>0$, we define the attention cost $C: \mathcal{R}\left(\mu\right) \rightarrow \mathbb{R}_+$ given by
		\begin{equation*}
		C(p)=\int_{\DT} c(\nu) \ddd p(\nu)=\int_{\DT}\kappa||\nu-\mu||^2 \ddd p(\nu),
		\end{equation*}
		where $||\cdot||$ is the Euclidean norm on $\mathbb{R}^{\Theta}$. 
		This cost function is posterior separable, that is, linear in the induced distribution of posterior beliefs.\footnote{For discussion and decision-theoretic foundations of posterior-separable costs, see \cite{caplin2017rationally, Morris2019,Hebert2019,Denti2020}.} 
		By Jensen's inequality, processing more information, in the sense of obtaining a policy $p$ more (Blackwell) informative than $q$, denoted $p \succeq^B q$,\footnote{\label{mps}For any $p,q\in \mathcal{R}(\mu)$, $p\succeq^B q$ if $p$ is a mean-preserving spread of $q$, that is, there is some measurable $r:\Delta\Theta\rightarrow\Delta\Delta\Theta$ such that (i) $p(S)=\int_{\Delta\Theta}^{}r(S|\cdot)\ddd q,\forall\text{ Borel }S\subseteq\Delta\Theta$ and (ii) $r(\cdot|\nu)\in\mathcal{R}(\nu),\forall \nu\in \Delta\Theta$.
		As is standard, this formalism is equivalent to one in which the principal chooses an experiment $\Theta\to\Delta M_P$, the agent (having seen the choice of experiment but not a realized message) chooses a Blackwell garbling $M_P\to\Delta M_A$, and the agent observes the realization (in $M_A$) of the composed experiment. The agent's garbling captures inattention and the loss of information it causes.} will incur a higher cost. 
		
				\medskip
		
		The timing of the game is as follows:
		\begin{itemize} 
	\item[--] The principal first chooses an information policy $p\in\RP(\mu)$. 
	\item[--] The agent then decides to what extent he should pay attention to $p$: He chooses a policy $q\in\RP(\mu)$ such that $q \preceq^B p$. Such a policy $q$ is called an {\bf (attention) outcome}.
	\item[--] Nature draws an agent belief $\nu\in\DT$ via $q$.
	\item[--] The agent chooses an action $a\in A$. 
	\item[--] Nature chooses a state $\theta\in\Theta$ via $\nu$.
\end{itemize}
		We study principal-preferred subgame perfect equilibria of this game.

        Define the principal's and the agent's indirect utilities at $\nu\in \Delta \Theta$ as:
        \begin{align*}
        U_P(\nu) &= U(\nu) := \max_{a\in A} \sum_{\theta\in \Theta}u(a,\theta)\nu(\theta),\\
        U_A(\nu) &= U(\nu)-c(\nu).
        \end{align*}
		The principal's problem can then be formalized as follows:
		\begin{equation}\label{eqn:designer}
		\begin{aligned}
		\sup_{p,q} & \displaystyle{\int_{\Delta\Theta} U_P \ddd q} \\
		&\text{s.t.}\enspace p \in\mathcal{R}\left(\mu\right) \enspace \mbox{and}\enspace q\in G^*(p),
		\end{aligned}
		\end{equation}
		where 
		$$G^*(p):= \argmax_{q\in\RP(\mu): \ q\preceq^B p} \left\{ \int_{\Delta\Theta} U \ddd q - C(q) \right\} = \argmax_{q\in\RP(\mu): \ q\preceq^B p} \int_{\Delta\Theta} U_A \ddd q$$
		is the agent's optimal garbling correspondence.	An information policy $p^*\in \mathcal{R}(\mu)$ is \textbf{(principal-) optimal} if $(p^*,q^*)$ solves \ref{eqn:designer} for some outcome $q^*\in\Delta\Delta\Theta$. The corresponding $q^*$ is an \textbf{optimal (attention) outcome}.
		
		As formalized, it is clear that the principal's problem is one of delegation. The policy $p$ chosen by the principal only appears in the constraint and does not directly affect any party's payoff. In effect, the principal makes available a menu of information policies, from which the agent picks his preferred one.
		
		We carry out our analysis under specific assumptions on the form of the agent's material payoff and his information processing cost. The foremost reason for our choice of the functional form is tractability: With quadratic preferences, we can fully characterize the set of incentive-compatible information policies, which serves as a foundation for finding an optimal policy.\footnote{Such tractability also enables complete solutions in \cite{ely2015suspense}, which uses the quadratic distance between beliefs to model preferences for suspense and surprise.} In addition to technical convenience, we see an important conceptual advantage to studying quadratic preferences. In short, quadratic preferences isolate the new forces that arise due to multidimensionality of the agent's problem. See Section \ref{sec:genquad} for a discussion on this point, and also for a more general class of quadratic preferences that enjoy the same advantages.

}

	\section{Simplifying Disclosure}\label{sec:3}
{\large In this section, we reduce the principal's disclosure problem to an amenable form. We first recall the existence result established in our previous paper, with an additional simplification, to reduce the constraint set to incentive-compatible policies with small support. We then provide a meaningful characterization of such policies, exploiting the quadratic structure of the present model. As such, we set the stage for deriving principles and solutions to the attention management problem.
\subsection{Rewriting the Principal's Program}

	Our previous paper establishes the existence of solution to the principal's problem, additionally showing that some optimum takes a special and convenient form.\footnote{The result we report here is stronger than Lemma 1 as stated in LMW. However, the proof for this strengthened result was recorded in LMW's Online Appendix.} Say that an information policy $p\in \mathcal{R}(\mu)$ is \textbf{incentive compatible (IC)} if the agent finds it optimal to pay full attention to it, that is, if $p\in G^*(p)$; and say it is {\bf nonredundant} if \mbox{\supp}$(p)$ is affinely independent.
	\begin{lemma}\label{thm:finite}
		There exists a nonredundant solution $q^*$ to 
		\begin{equation}\label{eqn:designer1}
		\begin{aligned}
		\max_q & \displaystyle{\int_{\DT} U_P \ddd p} \\
		&\text{s.t.} \hspace{0.5cm} (i) \hspace{0.4cm} q \in \mathcal{R}(\mu) \\
		& \hspace{1cm} (ii) \hspace{0.4cm}
		q \,\,\mbox{is IC}
		\end{aligned} 
		\end{equation}
		Moreover, $q^*$ solves \ref{eqn:designer1} if and only if $\left(p^*, q^*\right)$ solves \ref{eqn:designer} for some $p^*$.
	\end{lemma}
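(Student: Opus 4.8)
The plan is to treat the ``moreover'' equivalence as the substance of the lemma and to obtain the existence and nonredundancy of the maximizer by invoking the strengthened version of LMW's Lemma~1 cited above. Throughout, write $K(p):=\{q\in\mathcal R(\mu):q\preceq^B p\}$, so that $G^*(p)=\argmax_{q\in K(p)}\int_{\DT}U_A\ddd q$, and recall that a policy $q$ is IC exactly when $q\in G^*(q)$.

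The first step is a self-consistency observation: if $q\in G^*(p)$ for \emph{some} policy $p$, then $q$ is IC. This is immediate from transitivity and reflexivity of the Blackwell order: since $q\preceq^B p$, transitivity gives $K(q)\subseteq K(p)$, while $q\in K(q)$; as $q$ maximizes the linear map $q'\mapsto\int_{\DT}U_A\ddd q'$ over the larger set $K(p)$, it a fortiori maximizes it over $K(q)$, i.e. $q\in G^*(q)$. Conversely, any IC policy $q$ is trivially an attention outcome, induced by $p:=q$ itself, since $q\preceq^B q$ and $q\in G^*(q)$. Hence the set of outcomes $q$ for which $(p,q)$ is feasible in \ref{eqn:designer} for some $p$ coincides \emph{exactly} with the set of IC policies, i.e. with the feasible set of \ref{eqn:designer1}.

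The second step combines this with the fact that the objective of \ref{eqn:designer} depends on the pair $(p,q)$ only through $q$ --- indeed only through the same functional $q\mapsto\int_{\DT}U_P\ddd q$ that is maximized in \ref{eqn:designer1} --- since $p$ enters \ref{eqn:designer} solely via the constraint. Consequently the two programs have the same value, and: (i) if $q^*$ solves \ref{eqn:designer1}, then $(q^*,q^*)$ is feasible in \ref{eqn:designer} and attains that common value, so it solves \ref{eqn:designer}; (ii) if $(p^*,q^*)$ solves \ref{eqn:designer}, then $q^*$ is IC by Step~1 and attains the common value, so it solves \ref{eqn:designer1}. This is the claimed equivalence. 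For the existence-with-nonredundancy statement I will invoke the strengthened LMW result, which furnishes an optimal pair $(p^*,q^*)$ for \ref{eqn:designer} with $\supp(q^*)$ affinely independent; by (ii) this $q^*$ is a nonredundant solution of \ref{eqn:designer1}. (Alternatively, one could argue existence for \ref{eqn:designer1} from scratch: $\mathcal R(\mu)$ is weak$^*$ compact, the Blackwell relation on $\mathcal R(\mu)\times\mathcal R(\mu)$ is closed and $U_A$ is continuous, so the IC set is compact and the continuous objective attains a maximum there; Step~2 then transfers attainment back to \ref{eqn:designer}.)

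Steps~1 and~2 are routine --- they use only transitivity and closedness of $\succeq^B$ together with the coincidence of objectives --- so the one genuinely delicate ingredient is the nonredundancy: pruning an optimal IC policy down to an affinely independent support without sacrificing the principal's payoff. That is precisely the content LMW established separately, via a Carathéodory-type splitting argument, in their Online Appendix, and here I would cite it rather than reprove it.
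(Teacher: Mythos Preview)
Your proposal is correct and aligns with the paper's own treatment: the paper does not prove Lemma~\ref{thm:finite} in its appendix but simply cites LMW's Online Appendix for the strengthened result, which is exactly the invocation you make for the existence-with-nonredundancy part. The only difference is that you spell out the equivalence argument (Steps~1 and~2) explicitly, whereas the paper leaves that implicit in the citation; your argument there is clean and correct, relying only on transitivity of $\succeq^B$ and the fact that $p$ enters \ref{eqn:designer} solely through the constraint.
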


\begin{spacing}{1.1}
	The lemma shows that focusing on nonredundant, incentive-compatible information policies is without loss. 
	Since $\Theta$ is finite, nonredundancy implies the need for fewer messages than there are states.  Thus, Lemma \ref{thm:finite} reduces an infinite-dimensional optimization problem to a finite-dimensional one. In addition to having a small support, a nonredundant information policy enjoys another technical convenience: Its set of garblings is straightforward to describe, which simplifies the task of checking whether it is IC.  Specifically, for all $p,q\in \mathcal{R}(\mu)$ with $p$ nonredundant,\footnote{See \citet[Theorem 5]{wu2017coordinated}. He assumes $q$ to have finite support, but the general argument is identical.} 
	\begin{align*}
	p\succeq^{B}q \enspace \iff \enspace \supp(q) \subseteq \mbox{co}\left[\supp(p)\right].
	\end{align*}
\end{spacing}
}
	\subsection{Payoffs and Incentives}
	{\large
		
		Below, we index statistical measures, such as expectation $\E$ and variance $\V$, by the distribution of the underlying random variable.  For example, $\E_\nu\theta = \int_{\Theta} \tilde\theta\ddd\nu(\tilde\theta)$, $\V_\nu\theta = \E_\nu[\theta^2] - [\E_\nu \theta]^2$ and $\E_{p}\V_\nu\theta = \int_{\Delta \Theta}\V_\nu \theta\, \dd p(\nu)$.

				\bigskip
		
		\noindent {\bf Principal's Payoff}. The marginal distribution of actions is sufficient to compute our principal's expected payoffs, due to her simple ``match the state'' motive. 
		Indeed, as the agent's optimal action at any belief $\nu\in\DT$ is 
		$$
		a^*(\nu):=\argmax_{a\in A}u(a,\nu)=\E_{\nu}\theta,
		$$
		the principal's value is given by
		\begin{equation*}
		U_P(\nu) :=  - \E_{\nu} [ (\theta-a^*(\nu))^2 ]= -\V_{\nu}\theta,
		\end{equation*}
		which is strictly convex in $\nu$. 
		
		Take any incentive-compatible information policy $p\in \mathcal{R}(\mu)$, and note that 
		\begin{equation*}
		\int_{\DT}U_P \ddd p=
		-\E_{p}\V_{\nu}\theta
		=\V_{p}\E_{\nu}\theta-\V_{\mu}\theta
		=\V_{p}[a^*(\nu)]-\V_{\mu}\theta,
		\end{equation*}
		by the law of total variance. Therefore:
		\begin{observation}\label{obs:action}
			For any IC policies $p$ and $p'$, the principal strictly prefers $p$ to $p'$ if and only if $\V_{p}[a^*(\nu)]>\V_{p'}[a^*(\nu)]$.
		\end{observation}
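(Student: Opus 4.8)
The plan is to read the claim off directly from the law-of-total-variance identity displayed just above the statement, so the ``proof'' is essentially a matter of bookkeeping. First I would record the identity itself: because $p\in\RP(\mu)$ is an information policy, the random belief $\nu$ drawn via $p$ has $\E_p\E_\nu\theta=\E_\mu\theta$, so the law of total variance gives $\V_\mu\theta=\E_p\V_\nu\theta+\V_p[\E_\nu\theta]$. Rearranging, and using $a^*(\nu)=\E_\nu\theta$ together with $U_P(\nu)=-\V_\nu\theta$, yields
\begin{equation*}
\int_{\DT}U_P\ddd p=-\E_p\V_\nu\theta=\V_p[a^*(\nu)]-\V_\mu\theta,
\end{equation*}
and the same identity holds with $p$ replaced by $p'$.

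Next I would invoke incentive compatibility to pin down what ``the principal prefers $p$ to $p'$'' means. Since $p$ is IC, $p\in G^*(p)$, so $q^*=p$ is an optimal attention outcome given $p$, and hence the principal's equilibrium payoff from offering $p$ equals $\int_{\DT}U_P\ddd p$; likewise for $p'$. Thus the principal strictly prefers $p$ to $p'$ precisely when $\int_{\DT}U_P\ddd p>\int_{\DT}U_P\ddd p'$.

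Finally I would combine the two observations: because $\V_\mu\theta$ is a constant that does not depend on the policy, subtracting it from both sides of $\int_{\DT}U_P\ddd p>\int_{\DT}U_P\ddd p'$ is innocuous, so that inequality is equivalent to $\V_p[a^*(\nu)]>\V_{p'}[a^*(\nu)]$, which is the claim. There is no genuine obstacle here; the only two points meriting a moment's care are (i) the justification of the law of total variance, which rests on the martingale/mean-preserving property built into the definition of $\RP(\mu)$, and (ii) the use of IC, which is exactly what equates the principal's realized payoff with $\int_{\DT}U_P\ddd p$ rather than with $\int_{\DT}U_P\ddd q$ for some strictly less informative outcome $q$.
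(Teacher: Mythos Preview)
Your proposal is correct and follows exactly the paper's approach: the observation is stated immediately after the law-of-total-variance identity $\int_{\DT}U_P\ddd p=\V_{p}[a^*(\nu)]-\V_\mu\theta$, and the paper treats it as an immediate consequence (``Therefore:'') of that displayed equation. Your added remark about why IC is invoked---so that the principal's realized payoff from offering $p$ is actually $\int_{\DT}U_P\ddd p$---is a helpful elaboration the paper leaves implicit, but the substance of the argument is identical.
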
\medskip
		
		\noindent {\bf Psychological vs. Material Incentives}. 
		We now turn to the agent's incentives. Our goal is to characterize the set of IC policies which serves as the constraint in the principal's optimization problem. It turns out that IC can be characterized by a sequence of local comparisons between psychological cost and material benefit.
		
		The agent's attention cost at $\nu$ is 
		$$
		c(\nu) := \kappa ||\nu - \mu||^2 = \kappa \sum_\theta (\nu_\theta - \mu_{\theta})^2,
		$$
		so that his net indirect utility is 
		\begin{equation*}
		U_A(\nu) := U_P(\nu) - c(\nu) = \left[\left( \sum_{\theta} \nu_\theta \theta \right)^2 - \kappa \sum_{\theta} \nu_\theta^2\right] 
		+ {h(\nu)}
		\end{equation*}
		where $\nu_{\theta}:=\mathbb P(\theta|\nu)$ and $h(\nu)$ is affine in $\nu$. 
		
		To understand the agent's attentional tradeoff in this model, consider first a binary-support policy. Fix some $\nu,\nu^\prime$, and let $x_\theta:=\nu_\theta-\nu^\prime_{\theta}$ for all $\theta\in\Theta$. For $\epsilon\in [0,1]$, 
		\begin{eqnarray}\label{eqn:binaryIC}
		\frac{1}{2}\frac{\mbox{d}^2}{\mbox{d} \epsilon^2}U_A(\nu+\epsilon(\nu^\prime-\nu))
		&=& \left(\sum_{\theta} x_{\theta} \theta \right)^2 - \kappa \sum_{\theta} x_{\theta}^2 \\
		&=& | \E_{\nu} \theta - \E_{\nu'} \theta |^2 - \kappa || \nu - \nu'||^2\nonumber.
		\end{eqnarray}
		This derivative measures the curvature of $U_A$ between $\nu$ and $\nu'$ and brings about two notions of distance between beliefs:  
		\begin{itemize}
			\item[--] The \textbf{choice distance}, $| \E_{\nu} \theta - \E_{\nu'} \theta |$, which describes the change in action caused by a change in beliefs.
			\item[--] The \textbf{psychological distance}, $\sqrt{\kappa} || \nu - \nu'||$, which is proportional to a standard distance between beliefs.
		\end{itemize}
		
		The psychological distance measures the {marginal cost of extra attention} in a given direction, while the choice distance measures {the marginal instrumental benefit}. A policy with binary support $\{\nu,\nu'\}$ is IC if and only if $U_A$ is convex between $\nu$ and $\nu'$, that is, if and only if the choice distance exceeds the psychological distance.\footnote{This observation follows from Jensen's inequality, together with the observation that the univariate function $\epsilon\mapsto U_A(\nu+\epsilon(\nu^\prime-\nu))$ is (being quadratic) either globally weakly convex or globally strictly concave.}
		The next proposition generalizes this fact to all nonredundant policies.
		
		\begin{proposition}\label{prop:IC}
			A nonredundant policy $p\in \mathcal{R}(\mu)$ is IC if and only if 
			\begin{equation}\label{eq:orderIC}
			|a^{n+1}- a^{n} |\geq \sqrt{\kappa} || \nu^{n+1} - \nu^{n}|| \quad \forall n\in \{1,\ldots,N-1\},
			\end{equation} 
			where $\mbox{supp}(p) = \{\nu^1,\ldots,\nu^{N}\}$ such that $a^n:=\E_{\nu^n}(\theta)$ and $a^1\leq\cdots\leq a^N$.
		\end{proposition}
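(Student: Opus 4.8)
\medskip
\noindent\textit{Proof proposal.} Write $p_n:=p(\{\nu^n\})>0$, so that $\mu=\int\nu\ddd p(\nu)=\sum_n p_n\nu^n$. The plan is to use the reduction --- valid because $p$ is nonredundant --- under which $p$ is IC exactly when $p$ maximizes $q\mapsto\int_{\DT}U_A\ddd q$ over all $q\in\RP(\mu)$ with $\supp(q)\subseteq S:=\co[\supp(p)]$; this follows from $p\in G^*(p)$ together with the characterization $q\preceq^B p\iff\supp(q)\subseteq S$ recorded above. I will use throughout that $U_A$ is a quadratic with constant Hessian, so along any segment $[\nu,\nu']$ the univariate map $\epsilon\mapsto U_A(\nu+\epsilon(\nu'-\nu))$ is \emph{either} weakly convex \emph{or} strictly concave, and --- by \ref{eqn:binaryIC} --- it is weakly convex precisely when $|\E_\nu\theta-\E_{\nu'}\theta|\ge\sqrt\kappa\,\|\nu-\nu'\|$.

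For necessity, suppose $p$ is IC and fix $n\in\{1,\dots,N-1\}$. Let $q$ be the policy obtained from $p$ by replacing its mass on $\nu^n$ and $\nu^{n+1}$ with mass $p_n+p_{n+1}$ at the pooled belief $\bar\nu:=\frac{p_n\nu^n+p_{n+1}\nu^{n+1}}{p_n+p_{n+1}}$; then $q\in\RP(\mu)$ and $\supp(q)\subseteq S$, so $q$ is a feasible garbling of $p$. Since $p$ is IC, $\int U_A\ddd p\ge\int U_A\ddd q$, which after cancelling common terms and dividing by $p_n+p_{n+1}$ reads $t\,U_A(\nu^n)+(1-t)\,U_A(\nu^{n+1})\ge U_A\!\bigl(t\nu^n+(1-t)\nu^{n+1}\bigr)$ with $t:=\frac{p_n}{p_n+p_{n+1}}\in(0,1)$. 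This is incompatible with $U_A$ being strictly concave on $[\nu^n,\nu^{n+1}]$, so by the dichotomy it is weakly convex there --- which is exactly \ref{eq:orderIC} at index $n$. (Running the same argument for an arbitrary pair $\nu^m,\nu^n$ shows the a priori stronger fact that every pairwise comparison holds.)

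For sufficiency, assume \ref{eq:orderIC}. First I upgrade it to all pairs: for $m<n$, the ordering $a^1\le\cdots\le a^N$ makes the increments telescope without cancellation, so
\[
|a^n-a^m|=\sum_{k=m}^{n-1}\bigl(a^{k+1}-a^k\bigr)=\sum_{k=m}^{n-1}\bigl|a^{k+1}-a^k\bigr|
\ge\sqrt\kappa\sum_{k=m}^{n-1}\|\nu^{k+1}-\nu^k\|\ge\sqrt\kappa\,\|\nu^n-\nu^m\|,
\]
the last step by the triangle inequality; it is precisely this telescoping, which relies on the labeling in the statement, that lets consecutive comparisons suffice. Hence $U_A$ is weakly convex on every edge of the simplex $S$. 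Next, let $\phi$ be the unique (by nonredundancy) affine function on the affine hull of $S$ with $\phi(\nu^n)=U_A(\nu^n)$ for all $n$; I claim $\phi\ge U_A$ on $S$. Writing points of $S$ in barycentric coordinates $\lambda$ in the unit simplex of $\real^N$, the difference $R(\lambda):=(\phi-U_A)(\sum_n\lambda_n\nu^n)$ is a degree-$\le 2$ polynomial vanishing at every vertex, hence --- homogenizing on the hyperplane $\sum_n\lambda_n=1$ --- of the form $R(\lambda)=\sum_{m<n}r_{mn}\lambda_m\lambda_n$; its restriction to the edge joining vertices $m$ and $n$ equals both $r_{mn}\lambda_m\lambda_n$ and $\lambda_m U_A(\nu^m)+\lambda_n U_A(\nu^n)-U_A(\lambda_m\nu^m+\lambda_n\nu^n)$, and the latter is $\ge 0$ by edge-convexity, so $r_{mn}\ge 0$ for all $m<n$ and thus $R\ge 0$ on $S$. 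Finally, for any feasible garbling $q$,
\[
\int_{\DT}U_A\ddd q\le\int_{\DT}\phi\ddd q=\phi\Bigl(\int_{\DT}\nu\ddd q(\nu)\Bigr)=\phi(\mu)=\sum_n p_n\phi(\nu^n)=\sum_n p_n U_A(\nu^n)=\int_{\DT}U_A\ddd p,
\]
using that $\phi$ is affine and $\mu=\sum_n p_n\nu^n$; hence $p\in G^*(p)$, i.e.\ $p$ is IC.

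I expect the crux to be the second step of sufficiency --- that the affine interpolant of $U_A$ at the vertices of $S$ dominates $U_A$ on the whole of $S$. Edgewise convexity is much weaker than convexity on $S$, and what makes the implication go through is the affine-quadratic structure of $U_A$: a quadratic vanishing at the $N$ affinely independent vertices decomposes, in barycentric coordinates, into a sum of ``edge'' terms, each of a sign fixed by a single pairwise comparison. The telescoping identity in the first step is the other essential ingredient, and is what forces the ``chain'' form of the condition.
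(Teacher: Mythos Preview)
Your proof is correct, and the necessity direction as well as the ``consecutive $\Rightarrow$ all pairs'' step are essentially the paper's argument (indeed, your telescoping is slightly cleaner than the paper's convex-combination-plus-quasiconvexity route, which needs a case split on whether $a^j>a^i$).

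The sufficiency direction, however, is genuinely different. The paper argues as follows: invoking results from LMW's Online Appendix, it picks a Blackwell-maximal $q^*\in G^*(p)$, and then shows by a perturbation argument that $\supp(q^*)\subseteq\supp(p)$. The idea is that if some $\nu^*\in\supp(q^*)$ were a nontrivial mixture of the vertices, one could split mass along a direction $\nu'-\nu''$ with $\nu',\nu''\in\supp(p)$; edgewise convexity of $U_A$ (which, by the constant-Hessian observation, transfers to any parallel segment) makes this a weak improvement, contradicting Blackwell-maximality. Bijectivity of the barycentric map then forces $q^*=p$. Your argument instead builds the affine interpolant $\phi$ of $U_A$ at the vertices and shows $\phi\geq U_A$ on $\co[\supp(p)]$ by decomposing the quadratic $\phi-U_A$ in barycentric coordinates into edge terms $r_{mn}\lambda_m\lambda_n$ with $r_{mn}\geq 0$. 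This is more elementary---it avoids the appeal to existence of a Blackwell-maximal best response and the attendant measure-theoretic perturbation---and it makes fully explicit why the quadratic structure is doing the work: a quadratic vanishing at the $N$ affinely independent vertices has no ``interior'' terms, only edge terms whose signs are pinned down by pairwise convexity. The paper's approach, by contrast, is closer to a standard concavification/dilation argument and would adapt more readily if one wanted to replace the global quadratic hypothesis by weaker directional convexity conditions.
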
 
		
		Condition \ref{eq:orderIC}, which we term {\bf order-IC}, is apparently much weaker than IC.  It can be interpreted as immunity to a particular small family of deviations by the agent: For each pair of consecutive (in terms of their induced action) messages, the agent would rather pay full attention than engage in the simple garbling that perfectly pools this pair and pays full attention to all other messages. In the present quadratic model, one such deviation is profitable if any deviation from full attention is profitable.\footnote{In the appendix, we formulate a more general quadratic model of attention management, allowing for multidimensional action and state spaces. Appropriately generalizing the definitions of choice and psychological distances, we can show that Proposition \ref{prop:IC} exactly goes through whenever the action space is unidimensional; moreover, even if the actions are not linearly ordered, an analogue of Proposition \ref{prop:IC} (without a reference to ``consecutive messages'') still holds.} 
		
		Beyond being interpretable, the characterization of IC in Proposition \ref{prop:IC} is useful. Below, we illustrate how it helps reduce the problem of optimal attention management. The next section will then use said reduction to fully solve a canonical example.
	
		\subsection{Implications of IC for Optimal Disclosure}\label{sec:3.3}
		For a nonredundant policy $p$ with support $\{\nu^n\}_{n=1}^N$ such that $N\geq 2$ and $\E_{\nu^1}(\theta)\leq \cdots \leq \E_{\nu^N}(\theta)$, let the \textbf{direction} of $p$ be the vector\footnote{Referring to ``the'' direction is a mild abuse of terminology when two distinct beliefs might induce the same action, leading to an ambiguous order over actions.  Because such information policies will obviously never be IC (as pooling those beliefs is strictly profitable for the agent), this ambiguity is immaterial for our purposes.}
		\begin{equation*}
		\left(\frac{\nu^{n+1} - \nu^n}{||\nu^{n+1} - \nu^n||}\right)_{n=1}^{N-1}. 
		\end{equation*}
		The direction of an information policy is the vector of normalized changes in uncertainty from one belief to the next. We say that a policy $p$ is \textbf{interior} if there exists another policy $p'$ with the same direction such that $p'\succ^B p$. The next corollary follows naturally from Proposition \ref{prop:IC} and Observation \ref{obs:action}.
		\begin{corollary}\label{cor:IC}
			A nonredundant policy can be an optimal attention outcome only if no interior and IC policy induces the same distribution of actions.
		\end{corollary}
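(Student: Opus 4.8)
The plan is to prove the contrapositive: assume $q^*$ is a nonredundant attention outcome and that some interior, IC policy $p$ induces the same distribution of actions as $q^*$, and derive a contradiction by exhibiting a feasible policy the principal strictly prefers to $q^*$. Interiority hands me a policy $p'$ with the same direction as $p$ and $p' \succ^{B} p$. I will show $p'$ is IC and that it strictly beats $q^*$ in the principal's objective; since $p' \in \RP(\mu)$, the pair $(p',p')$ is then feasible for~\ref{eqn:designer} with a strictly larger value, contradicting optimality of $q^*$.

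First I would check that $p'$ is IC. Write $\supp(p) = \{\nu^1, \dots, \nu^N\}$ labeled so that $a^n := \E_{\nu^n}\theta$ is increasing in $n$. A policy has a direction only if it is nonredundant, and ``$p'$ has the same direction as $p$'' then forces $p'$ to have $N$ support points; write $\supp(p') = \{\tilde\nu^1, \dots, \tilde\nu^N\}$ labeled likewise, $\tilde a^n := \E_{\tilde\nu^n}\theta$, so that $\tilde\nu^{n+1} - \tilde\nu^n = \lambda_n (\nu^{n+1} - \nu^n)$ for scalars $\lambda_n > 0$. Since $a^*(\nu) = \E_\nu\theta$ is affine, the consecutive action gaps $\tilde a^{n+1} - \tilde a^n$ and the belief distances $\|\tilde\nu^{n+1} - \tilde\nu^n\|$ of $p'$ equal $\lambda_n$ times those of $p$; hence the order-IC inequalities of Proposition~\ref{prop:IC} for $p$ rescale termwise into the corresponding ones for $p'$, and Proposition~\ref{prop:IC} gives that $p'$ is IC. I will also record that any nonredundant IC policy induces strictly increasing actions—order-IC forces $|a^{n+1} - a^n| \ge \sqrt\kappa\, \|\nu^{n+1} - \nu^n\| > 0$ since $\kappa > 0$ and the support points are distinct—a fact I use below, valid for both $p$ and $p'$.

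Next I would compare principal payoffs via the law-of-total-variance identity $\int U_P \ddd r = \V_r[a^*(\nu)] - \V_\mu\theta$ underlying Observation~\ref{obs:action}; since $p$ and $q^*$ induce the same action distribution, it suffices to prove $\V_{p'}[a^*(\nu)] > \V_p[a^*(\nu)]$. The weak inequality is routine: $\nu \mapsto (a^*(\nu))^2$ is convex (a squared affine functional), $p' \succeq^{B} p$, and both policies induce the mean action $\E_\mu\theta$, so $\V_{p'}[a^*] \ge \V_p[a^*]$. Strictness is the step I expect to be the main obstacle, and it is exactly where the ``same direction'' hypothesis must be invoked a second time. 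My plan is to argue that equality would force $p$ and $p'$ to induce the \emph{same} action distribution: writing $p'$ as a Blackwell dilation of $p$, the equality case of Jensen (fiber by fiber) would confine the dilation to level sets of the affine map $a^*$. Then, using that $a^*$ is injective on each support by strict monotonicity of actions, $\tilde a^n = a^n$ with equal weights for all $n$; so $\lambda_n(a^{n+1}-a^n) = \tilde a^{n+1} - \tilde a^n = a^{n+1} - a^n$ with $a^{n+1}-a^n > 0$ forces $\lambda_n = 1$ for every $n$, whence $\tilde\nu^n = \nu^n + v$ for a constant vector $v$, which the common prior mean $\mu$ forces to be $0$. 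That gives $p' = p$, contradicting $p' \succ^{B} p$; hence $\V_{p'}[a^*(\nu)] > \V_p[a^*(\nu)] = \V_{q^*}[a^*(\nu)]$.

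To conclude, $\int U_P \ddd p' = \V_{p'}[a^*(\nu)] - \V_\mu\theta > \V_{q^*}[a^*(\nu)] - \V_\mu\theta = \int U_P \ddd q^*$, and $p'$ is an IC policy in $\RP(\mu)$, so $(p',p')$ is feasible for~\ref{eqn:designer} and strictly improves on $q^*$—contradicting that $q^*$ is an optimal attention outcome. The one genuinely delicate point is the strict variance gain in the third paragraph; the rest is bookkeeping together with Proposition~\ref{prop:IC} and Observation~\ref{obs:action}.
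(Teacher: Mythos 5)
Your argument is correct and is essentially the paper's own: interiority yields a same-direction policy $p'\succ^B p$, the termwise rescaling of the order-IC inequalities (Proposition \ref{prop:IC}) shows $p'$ is IC, and Observation \ref{obs:action} converts the strict Blackwell improvement into a strict gain for the principal, contradicting optimality. The only difference is that you verify the strict variance gain in detail via the equality case of Jensen and the strictly increasing actions of an IC policy, whereas the paper simply asserts that the principal strictly prefers more information; this extra care is warranted (and harmless), since $U_P$ is convex but not strictly convex along directions that leave $a^*$ unchanged.
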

		
		Corollary \ref{cor:IC} is helpful for ruling out a variety of information policies and reducing the dimension of our optimization problem. First, Corollary \ref{cor:IC} implies that an IC information policy admits a strict improvement if it is itself interior. Said differently, information should be maximal given its direction.  To illustrate graphically, let us consider the belief simplex in a three-state environment. Note that the direction of an information policy is unaffected by any affine transformation of beliefs (as in Figures \ref{fig:scaleup0} and \ref{fig:scaleup}) or changes in non-consecutive slopes (as in Figure \ref{fig:scaleup2}). Consequently, such transformations leave the comparisons in \ref{eq:orderIC} (hence, IC) unchanged, and each dashed policy in Figure \ref{fig:informationsaturation} is a strict improvement of the corresponding solid interior policy.
	
		\begin{figure}[ht!]
			\centering
			\begin{tabular}{ccc}  
				\begin{subfigure}[t]{.3\linewidth}
					\begin{tikzpicture}[scale=1.1]
					\draw (-2,0) -- (2,0) -- (0,2)--(-2,0);
					\node [below] at (0.3,0.7) {\small $\mu$};
					\draw[fill] (0.3,0.7) circle [radius=0.05];
					\draw[fill,blue] (-1,1) circle [radius=0.05];
					\draw[fill,blue] (1.6,0.4) circle [radius=0.05];
					\draw[fill,green] (-0.48,0.88) circle [radius=0.05];
					\draw[fill,green] (1.08,0.52) circle [radius=0.05];
					\draw [dashed, blue] (-1,1) --   (1.6,0.4);
					\draw [green] (-0.48,0.88) --   (1.08,0.52);
					\end{tikzpicture}
					\caption{}\label{fig:scaleup0}
				\end{subfigure}&
				\begin{subfigure}[t]{.3\linewidth}
					\begin{tikzpicture}[scale=1.1]
					\draw (-2,0) -- (2,0) -- (0,2)--(-2,0);
					\node [left] at (0.3,0.7) {\small $\mu$};
					\draw[fill] (0.3,0.7) circle [radius=0.05];
					\draw [red] (-1,0.5) -- (1,0.5) -- (0,1.2)-- (-1,0.5);
					\draw[fill,red] (-1,0.5) circle [radius=0.05];
					\draw[fill,red] (0,1.2) circle [radius=0.05];
					\draw[fill,red] (1,0.5) circle [radius=0.05];
					\draw [dashed,blue] (-1.8,0.2) -- (1.8,0.2) -- (0,1.46)-- (-1.8,0.2);
					\draw[fill,blue] (-1.8,0.2) circle [radius=0.05];
					\draw[fill,blue] (0,1.46) circle [radius=0.05];
					\draw[fill,blue] (1.8,0.2) circle [radius=0.05];				
					\end{tikzpicture}
					\caption{}\label{fig:scaleup}
				\end{subfigure}&
				\begin{subfigure}[t]{.3\linewidth}
					\centering
					\begin{tikzpicture}[scale=1.1]
					\draw (-2,0) -- (2,0) -- (0,2)--(-2,0);
					\node [above] at (0.3,0.7) {$\mu$};
					\draw[fill] (0.3,0.7) circle [radius=0.05];
					\draw [red] (2,0) -- (0,2) -- (-1,0.5)-- (2,0);
					\draw[fill,red] (2,0) circle [radius=0.05];
					\draw[fill,red] (0,2) circle [radius=0.05];
					\draw[fill,red] (-1,0.5) circle [radius=0.05];
					\draw [dashed,blue] (2,0) -- (0,2) -- (-1.25,0.125)-- (2,0);
					\draw[fill,blue] (-1.25,0.125) circle [radius=0.05];
					\end{tikzpicture}
					\caption{}\label{fig:scaleup2}
				\end{subfigure}
			\end{tabular}
			\caption{Excluding Interior Policies}\label{fig:informationsaturation}
		\end{figure}
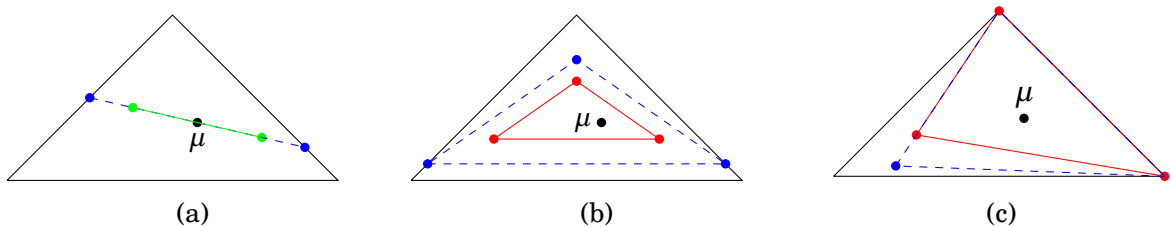
\begin{spacing}{1.1}
	Moreover, Corollary \ref{cor:IC} allows us to rule out some non-interior policies through a perturbation argument, by finding another interior policy that is IC and induces the same action distribution. To illustrate graphically, let us again consider the belief simplex in a three-state environment. In Figure \ref{fig:perturb}, the solid red policy is not interior, so it cannot be ruled out using the previous argument. Nonetheless, if we can find an interior perturbation (e.g., the dashed blue policy) to which it is easier to pay attention, while inducing the same action distribution (hence the same principal payoff),\footnote{Figure \ref{fig:perturb} represents beliefs by the probability $z$ that $\theta=0$ and the induced action $a$. This representation is also used in the next section.} then the dashed policy enables a strict improvement, making the original policy suboptimal.  
\end{spacing}	
	
	\begin{figure}[ht!]
		\centering
		\begin{tikzpicture}[scale=1.4]
		\draw (-2,0) -- (2,0) -- (0,2)--(-2,0);
		\node [below right] at (-0.03,0.8) {\small $\mu$};
		\node [above] at (2.5,0) {$a$};
		\node [right] at (0,2.1) {$z$};
		\draw[>=stealth, dashed,->]  (0,0) -- (0,2.2);
		\draw[>=stealth, dashed,->]  (-2.7,0) -- (2.7,0);
		\draw[dashed,blue] (-1.9,0.1) --  (0.7,1.1) -- (0.2,0.2) -- (-1.9,0.1);
		\draw[fill] (0.3,0.7) circle [radius=0.05];
		\draw [fill, red] (0.7,1.3) circle [radius=0.05];
		\draw [fill, red] (-1.9,0.1) circle [radius=0.05];
		\draw [fill, red] (0.2,0) circle [radius=0.05];
		\draw [fill, blue] (0.7,1.1) circle [radius=0.05];
		\draw [fill, blue] (0.2,0.2) circle [radius=0.05];
		\draw [red]  (0.7,1.3) -- (-1.9,0.1) -- (0.2,0) -- (0.7,1.3);
		\end{tikzpicture}
		\caption{Excluding Some Non-Interior Policies}\label{fig:perturb}
	\end{figure}
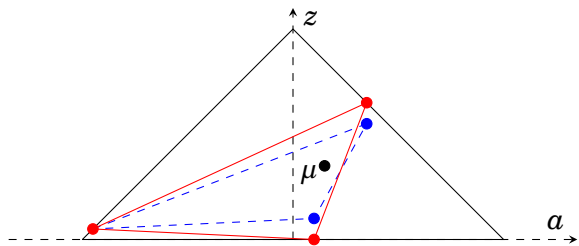	
	}
	\section{Optimal Attention Management: Three States}\label{sec:3state}
		{\large			
		We now apply our results in the previous section to pare down the search for optimal information in a model with three (evenly spaced) states, $\Theta = \{-1,0,1\}$.  We can interpret this scenario as a teacher creating a lesson plan to maximize her student's score in an exam where the answer to each question is \quizT{} (1), \quizF{} (-1) or \quizU{} (0).			
		
		We represent beliefs in the $(a,z)$ space, where $\z$ is the agent's belief that $\theta$ is zero and $a$ is his optimal action (see Figure \ref{fig:abc} in the Appendix).\footnote{The associated belief is
			$
			\nu_{(a,\z)}=\z\delta_0+\frac{1-\z+a}{2}\delta_1+\frac{1-\z-a}{2}\delta_{-1},
			$
			for $\z\in[0,1]$ and $a\in[\z-1,1-\z]$.}
		Rearranging \ref{eq:orderIC}, a policy $p$ with supported beliefs $\nu^n = \nu_{(a^n,\z^n)}$ is IC if and only if $\kappa \leq 2$ and
		\begin{equation*}
		\left|\frac{\z^{n+1}-\z^n}{a^{n+1}-a^n}\right| \leq s^*(\kappa):=\sqrt{\frac{2-\kappa}{3\kappa}}  \quad \forall n.\label{eqn:lineIC3states}
		\end{equation*}
		That is, a policy is IC if and only if the absolute slopes between consecutive beliefs are less than a cutoff.

		Before stating our characterization, we introduce a language to speak about some special policies. Say the principal {\bf downplays} the state if, for some $\pi_1,\pi_{-1}\in [0,1]$, she sends message:
		$$m=\begin{cases}
		0 &: \text{ with probability }1, \text{ if }\theta= 0\\
		\theta&:  \text{ with probability }1 - \pi_\theta, \text{ if }\theta\neq 0\\
		0&:  \text{ with probability }\pi_\theta, \text{ if }\theta\neq 0.
		\end{cases}$$
		The principal downplays the state if she sometimes says $0$ when the true state is one of the extremes. Complete downplaying ($\pi_{-1}=\pi_1=1$) conveys no information; no downplaying ($\pi_{-1}=\pi_1=0$) fully discloses the state; other forms convey partial information. Let $\left\{ p^D_{(\pi_{-1},\pi_1)}:\ (\pi_{-1},\pi_1) \in [0, 1]^2 \right\}$ denote the policies induced by downplaying the state. Say that the principal {\bf exaggerates} the state if, for some $\pi\in [0,1]$, she sends message:
		$$m=\begin{cases}
		\hspace{0.3cm}\theta&: \text{ with probability }1, \text{ if }\theta\neq 0\\
		\hspace{0.3cm}1&:  \text{ with probability }1-\pi, \text{ if }\theta= 0\\
		-1&:  \text{ with probability }\pi, \text{ if }\theta= 0.
		\end{cases}$$		
		The principal exaggerates the state if she reports an extreme state when the true state is 0. Increasing $\pi$ makes the agent more (less) certain that $\theta = m$ upon receiving $m=1$ ($m=-1$). Let {\bf separating exaggeration} refer to $\pi \in \{0, 1\}$. Policies induced by exaggerating the state are denoted $\left\{ p^E_\pi:\ \pi \in [0, 1] \right\}$.
		
		Let $a_\mu=\mu_1-\mu_{-1}$ be the agent's optimal action at the prior, and define
		\begin{eqnarray*}
		\kappa_1 & :=& \tfrac{1}{2} \nonumber\\ 
		\kappa_2 & := &\tfrac{2}{\frac{3}{4}\left(\frac{1-|a_\mu|+\mu_0}{1-|a_\mu|}\right)^2+1} \nonumber\\
		\kappa_3 &:= & \tfrac{2}{3\left(\frac{\mu_0}{1-|a_\mu|}\right)^2+1} \nonumber \\
		\kappa_4 & :=& 2.
		\end{eqnarray*}
		Note that $\kappa_1<\kappa_2<\kappa_3<\kappa_4$ for any interior $\mu$.
		
		\begin{proposition}\label{prop:optimal}
			In the three-state model, an optimal attention outcome:
			\begin{enumerate}
				\item fully reveals the state when $\kappa\in \left(0,\kappa_1\right]$;\label{optimal1}
				\item downplays the state until \ref{eq:orderIC} holds with equality when $\kappa\in \left(\kappa_1,\kappa_2\right]$;\label{optimal2}
				\item engages in separating exaggeration when $\kappa\in \left[\kappa_2,\kappa_3\right]$;\label{optimal3}\footnote{In particular, $\pi=1$ ($\pi=0$) is optimal if and only if $a_\mu\geq 0$ ($a_\mu\leq 0$).}
				\item exaggerates the state until \ref{eq:orderIC} holds with equality when $\kappa\in \left(\kappa_3,\kappa_4\right]$;\label{optimal4}	
				\item reveals no information when $\kappa\in \left(\kappa_4,\infty\right)$.\label{optimal5}\vspace{-0.2cm}
			\end{enumerate}
			Moreover, for generic $\kappa$, every optimal attention outcome is of the above form.\footnote{When $\kappa=\kappa_2$, both downplaying and separating exaggeration are optimal, and continuum many optima exist that mix these two policies.  For any other $\kappa$, there is a unique optimal attention outcome (up to reversing the role of $1$ and $-1$ in the case that the prior is symmetric).}
		\end{proposition}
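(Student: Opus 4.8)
The plan is to use Lemma~\ref{thm:finite} and Observation~\ref{obs:action} to recast the problem as: maximize $\V_p[a^*(\nu)]$ over nonredundant IC policies $p$. In the $(a,\z)$ coordinates this is finite-dimensional — nonredundancy caps the support at three beliefs — and by Proposition~\ref{prop:IC} the IC constraint reads exactly $\kappa\le 2$ together with ``every two consecutive supported beliefs have absolute slope $\le s^*(\kappa)$,'' where $s^*(\kappa)=\sqrt{(2-\kappa)/(3\kappa)}$. Two extreme regimes are then quick. If $\kappa>\kappa_4=2$, no such $s^*$ exists, so (using also that a policy whose supported beliefs all induce a common action is never IC, since pooling them strictly helps the agent) the only IC policy is the degenerate one; this gives case~\ref{optimal5} and its uniqueness. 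If $\kappa\le\kappa_1=\tfrac12$, then $s^*(\kappa)\ge 1$, and full revelation — whose two consecutive slopes equal $1$ — is IC; since $\V_p[a^*(\nu)]=\V_\mu\theta-\E_p\V_\nu\theta$, full revelation is the unique maximizer of the objective over \emph{all} policies, hence the unique optimal outcome. This is case~\ref{optimal1}.

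The core is the range $\kappa\in(\kappa_1,\kappa_4]$, which I would treat in three steps. Step~1 (shape reduction): show that some optimal policy has the downplaying shape $\{\delta_{-1},\nu_{(a^0,\z^0)},\delta_1\}$ or the exaggeration shape $\{\nu_{(a^-,\z^-)},\nu_{(a^+,\z^+)}\}$, the two beliefs of the latter lying on the two side edges of the simplex (degenerating to the separating case when one of them is a vertex). The driving force: the principal wants $|a^n|$ as large as possible, but validity forces $\z^n\le 1-|a^n|$; Corollary~\ref{cor:IC} lets us push a belief along its own direction until validity or IC binds, which pins the extreme supported beliefs to $\partial\DT$. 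A classification of the remaining non-interior, nonredundant configurations, together with explicit domination arguments ruling out the non-canonical shapes (e.g.\ a two-belief policy joining a side edge to the bottom edge at a non-vertex point, or a three-belief policy using a side-edge point), leaves only the degenerate policy, full revelation, and the two families above.

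Step~2 (within-family optimization): for downplaying, Corollary~\ref{cor:IC} lets us take the slope binding, $\z^0=s^*(\kappa)(1-|a^0|)$; then the central mass equals $\mu_0/[s^*(\kappa)(1-|a^0|)]$ and the quantity one wants small is proportional to $1+|a^0|$, so $a^0=0$ is optimal, with central mass $\mu_0/s^*(\kappa)$ — feasible (i.e.\ $\pi_{\pm1}\le 1$) precisely when $s^*(\kappa)\ge \mu_0/(1-|a_\mu|)$, i.e.\ $\kappa\le\kappa_3$. For exaggeration, $\V_p[a^*(\nu)]$ is maximized at the prior-favored separating value $\pi\in\{0,1\}$ and is monotone on the sub-interval of $\pi$ between that value and the one equalizing $\z^-$ and $\z^+$; since separating exaggeration has slope $\mu_0/(1-|a_\mu|)$, the IC-constrained optimum is that separating policy when $\kappa\le\kappa_3$ and the IC-binding exaggeration policy otherwise. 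Step~3 (comparison): evaluating the best downplaying value and the best exaggeration value as explicit functions of $\kappa$ shows that downplaying wins exactly on $(\kappa_1,\kappa_2]$ and exaggeration on $[\kappa_2,\kappa_4]$, with equality at $\kappa_2$; assembling Steps~1--3 gives cases~\ref{optimal2}--\ref{optimal4}, the four cutoffs being exactly the $\kappa$ at which $s^*(\kappa)$ meets the thresholds $1$, $s^*(\kappa_2)=\tfrac{1-|a_\mu|+\mu_0}{2(1-|a_\mu|)}$, $s^*(\kappa_3)=\tfrac{\mu_0}{1-|a_\mu|}$, and $0$. Uniqueness for generic $\kappa$ follows since each within-regime optimization is strictly concave or has a unique corner, the only exceptions being the tie at $\kappa=\kappa_2$ and, when $a_\mu=0$, the relabeling of $1$ and $-1$.

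The main obstacle is Step~1: showing no ``exotic'' nonredundant IC policy can beat the two canonical families. Corollary~\ref{cor:IC} disposes of every policy with slack that could be tightened toward more spread-out actions, but the remaining non-interior policies — those joining distinct edges of the simplex, or using three edge/vertex points — must be compared directly, simultaneously juggling the validity constraints $\z^n\le 1-|a^n|$, the mean constraints on both $a$ and $\z$, and the IC slope constraints. The computations in Steps~2--3 are routine but lengthy, and matching the crossover points to the stated closed forms for $\kappa_1,\dots,\kappa_4$ is bookkeeping.
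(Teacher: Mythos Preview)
Your overall architecture matches the paper's: dispose of the extreme regimes directly, then for $\kappa\in(\kappa_1,\kappa_4)$ reduce the search to a small parametric family, optimize within it, and locate the crossover at $\kappa_2$. Where you diverge is in how Step~1 is executed. You propose to jump straight to the two canonical families (downplaying $\{\delta_{-1},\nu,\delta_1\}$ and exaggeration on the side edges) by ``classification and explicit domination.'' The paper instead first shows, via Claims and Lemmas analogous to your boundary-pushing argument, that any nonredundant optimum has its extreme beliefs on $L$ and $R$ (not yet at the vertices) and, if ternary, has both slopes binding at $\pm s^*(\kappa)$. It then parametrizes that larger family by $(a_1,a_2)$ and shows the principal's payoff is \emph{affine in $a_2$}; this collapses the search to the extreme $a_2$, yielding either a binary policy or a ternary one containing $\delta_1$. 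Within the latter, the payoff is again affine in $a_1$, and the sign of the coefficient flips exactly at $s^*(\kappa)=\tfrac{1-|a_\mu|+\mu_0}{2(1-|a_\mu|)}$, i.e.\ at $\kappa_2$, delivering simultaneously the reduction to downplaying and the Step~3 comparison. Your direct-domination route would reach the same endpoint, but the affine observation is the single clean idea that makes the reduction tractable; without it the ``three-belief policy using a side-edge point'' cases you flag are genuinely awkward to dominate pairwise.

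One substantive gap: your uniqueness argument only establishes that the within-family optimizer is unique. The proposition asserts uniqueness among \emph{all} optimal attention outcomes, which requires two further points you do not address. First, you need that every nonredundant optimum lies in the canonical families (Step~1 as stated only guarantees \emph{some} optimum does). Second, you must rule out redundant optimal outcomes; the paper does this by invoking that any optimal outcome lies in the closed convex hull of nonredundant optima, and then observing that IC (strict concavity of $U_A$ along each side edge when $s^*(\kappa)<1$) forbids two supported posteriors on the same side edge, forcing nonredundancy. Without this step, a convex combination of distinct nonredundant optima could in principle be an additional optimum.
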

		
				\begin{figure}[ht!]
			\centering
			\begin{tikzpicture}[scale=0.8]
			\draw (-2,0) -- (2,0) -- (0,2)--(-2,0);
			\node [above] at (0.3,0.7) {$\mu$};
			\node [below] at (2.8,0) {$a$};
			\node [right] at (0,2.3) {$z$};
			\node [below] at (0,-0.2) {$\kappa\leq\kappa_1$};
			\node [below] at (0,-0.7) {{\footnotesize(Full revelation)}};
			\draw[>=stealth, dashed,->]  (0,0) -- (0,2.5);
			\draw[>=stealth, dashed,->]  (-2.7,0) -- (3,0);
			\draw[fill] (0.3,0.7) circle [radius=0.05];
			\draw[fill,red] (0,2) circle [radius=0.05];
			\draw[fill,red] (-2,0) circle [radius=0.05];
			\draw[fill,red] (2,0) circle [radius=0.05];
			\draw [red] (-2,0) -- (0,2) -- (2,0)-- (-2,0);
			\end{tikzpicture}
			\begin{tikzpicture}[scale=0.8]
			\draw (-2,0) -- (2,0) -- (0,2)--(-2,0);
			\node [above] at (0.3,0.6) {$\mu$};
			\node [below] at (2.8,0) {$a$};
			\node [right] at (0,2.3) {$z$};
			\node [below] at (0,-0.2) {$\kappa_1<\kappa\leq \kappa_2$};
			\node [below] at (0,-0.7) {{\footnotesize(Downplaying)}};
			\draw[>=stealth, dashed,->]  (0,0) -- (0,2.5);
			\draw[>=stealth, dashed,->]  (-2.7,0) -- (3,0);
			\draw[fill] (0.3,0.7) circle [radius=0.05];
			\draw[fill,red] (2,0) circle [radius=0.05];
			\draw[fill,red] (-2,0) circle [radius=0.05];
			\draw[fill,red] (0,1.5) circle [radius=0.05];
			\draw [red] (-2,0) -- (0,1.5) -- (2,0)-- (-2,0);
			\draw (0,2)--(0.2,2);
			\draw (0,1.5) -- (0.2,1.5);
			\draw [decorate,decoration={brace,amplitude=3pt,mirror},xshift=0pt,yshift=0pt]
			(0.2,1.5) -- (0.2,2);
			\node [right] at (0.2,1.75) {\footnotesize frequency of mistakes};
			\end{tikzpicture}\\
			\begin{tikzpicture}[scale=0.8]
			\draw (-2,0) -- (2,0) -- (0,2)--(-2,0);
			\node [above] at (0.3,0.7) {$\mu$};
			\node [below] at (2.8,0) {$a$};
			\node [right] at (0,2.3) {$z$};
			\node [below] at (0,-0.2) {$\kappa_2\leq \kappa\leq \kappa_3$};
			\node [below] at (0,-0.7) {{\footnotesize (Separating exaggeration)}};
			\draw[>=stealth, dashed,->]  (0,0) -- (0,2.5);
			\draw[>=stealth, dashed,->]  (-2.7,0) -- (3,0);
			\draw[fill] (0.3,0.7) circle [radius=0.05];
			\draw[fill,green] (2,0) circle [radius=0.05];
			\draw[fill,green] (-0.83,1.17) circle [radius=0.05];
			\draw [green] (2,0) --   (-0.83,1.17);
			\end{tikzpicture}
			\begin{tikzpicture}[scale=0.8]
			\draw (-2,0) -- (2,0) -- (0,2)--(-2,0);
			\node [above] at (0.3,0.7) {$\mu$};
			\node [below] at (2.8,0) {$a$};
			\node [right] at (0,2.3) {$z$};
			\node [below] at (0,-0.2) {$\kappa_3\leq \kappa\leq \kappa_4$};
			\node [below] at (0,-0.7) {{\footnotesize (Exaggeration)}};
			\draw[>=stealth, dashed,->]  (0,0) -- (0,2.5);
			\draw[>=stealth, dashed,->]  (-2.7,0) -- (3,0);
			\draw[fill] (0.3,0.7) circle [radius=0.05];
			\draw[fill,green] (-1,1) circle [radius=0.05];
			\draw[fill,green] (1.6,0.4) circle [radius=0.05];
			\draw [green] (-1,1) --   (1.6,0.4);
			\end{tikzpicture}
			\begin{tikzpicture}[scale=0.8]
			\draw (-2,0) -- (2,0) -- (0,2)--(-2,0);
			\node [above] at (0.3,0.7) {$\mu$};
			\node [below] at (2.8,0) {$a$};
			\node [right] at (0,2.3) {$z$};
			\node [below] at (0,-0.2) {$\kappa>\kappa_4$};
			\node [below] at (0,-0.7) {{\footnotesize(No disclosure)}};
			\draw[>=stealth, dashed,->]  (0,0) -- (0,2.5);
			\draw[>=stealth, dashed,->]  (-2.7,0) -- (3,0);
			\draw[fill,red] (0.3,0.7) circle [radius=0.05];
			\end{tikzpicture}
			\caption{Optimal attention outcomes}
			\label{fig:optimal}
		\end{figure}
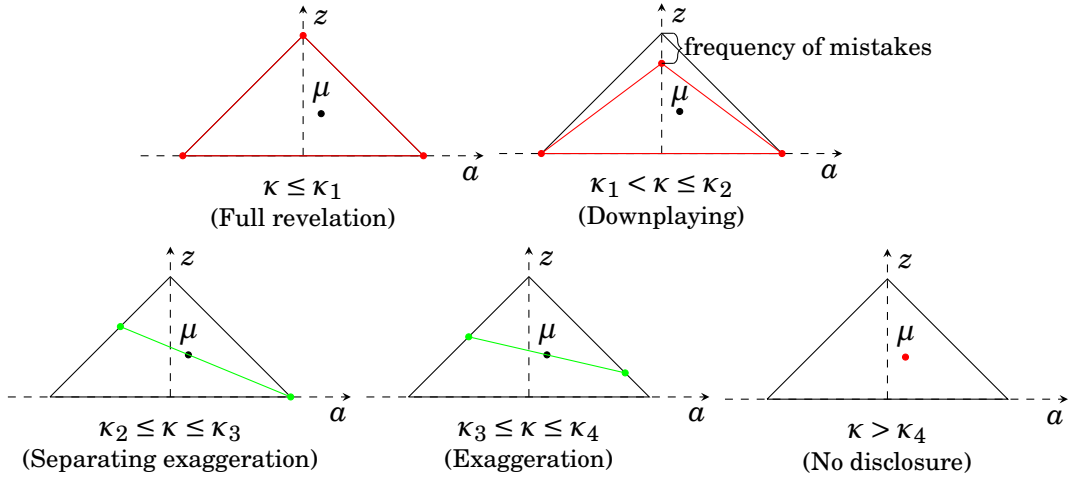

		When attention comes at low cost, the principal keeps quiet (i.e., reports 0) on some extreme occasions and is truthful the rest of the time. That is, she downplays the state. As shown in Figure \ref{fig:optimal}, this invites either inaction from the agent, who chooses 0, or an extreme reaction $\{-1, 1\}$. The latter are never mistaken, because they happen in precisely the principal's reported extreme state. However, inaction is harmful in extreme situations, so the agent makes {all} his mistakes through action 0. The frequency of mistakes, shown in Figure \ref{fig:optimal}, is chosen so that the agent is barely willing to pay full attention. Under this type of information disclosure, mistakes are large, always of size 1, but they are kept {rare} as long as $\kappa \leq \kappa_2$.
		
		As inattention grows more severe, the principal increasingly downplays the state to keep the agent's attention. Eventually, the principal switches to exaggerating the state to avoid the potentially harmful consequences of extreme behaviors and maintain the agent's full attention. Exaggeration results in smaller mistakes (than under downplaying, because their size is strictly less than 1), although they occur {more} often.

Our result implies that for intermediate attention costs, the principal can improve the material welfare of the agent by excluding some  information. This observation recalls principles from the delegation literature \citep{Szalay2005} showing that restricting an agent's choices can incentivize information acquisition. By downplaying or exaggerating the state, the principal endogenously restricts the agent's behavior, activating the same incentive channel.

}

	\section{Conclusion}\label{sec:5}
{\large	\begin{spacing}{1.1}
		This paper aims at understanding how a benevolent principal should optimally withhold information to help an inattentive agent make informed decisions. Building on the framework of \cite{Lipnowski2020}, 
		we present a tractable environment with quadratic payoffs where incentive-compatible information can be fully characterized. We illustrate the implications of incentive compatibility on optimal disclosure, and further apply them to derive optimal information policies in a canonical example with multidimensional information. As such, this paper provides insight into how to optimally manage attention.
	\end{spacing}
}

\bibliographystyle{jpe}
\bibliography{attentionm}
	\appendix
	\section{Appendix}
	\subsection{A General Quadratic Model of Attention Management}\label{sec:genquad}
	In this section, we outline a model of attention management with general quadratic preferences. Such a model is of interest for the following reason. The main result of LMW is that full information is universally optimal (that is, optimal for every choice problem and preference specification) if and only if the state space is binary. One can reinterpret this result as saying that any scope for beneficial attention management derives from the agent's space of beliefs being multidimensional. We see quadratic preferences as particularly well-suited to isolate the richness imparted by multidimensional beliefs, as the benefits and costs of information---captured by the degree of convexity of $U_A$ and $c$---are constant along every ``one-dimensional slice'' of the space of beliefs.\footnote{More precisely, the second derivative in condition \ref{eqn:binaryIC} is independent of $\epsilon$.}
		
	The generalized model is exactly as in Section \ref{sec:2}, except with a more general specification of the state space, action space, material objective, and cost of information. Given $J\in\mathbb N$, let $\Theta\subseteq\real^J$ be some finite nonempty set, and $A$ be the convex hull of $\Theta$ (or some compact subset of $\real^J$ that contains it). Given arbitrary vectors $\gamma,\pi\in\real^\Theta$ and arbitrary positive semidefinite matrices $\Gamma\in\real^{J\times J}$ and $\Pi\in\real^{\Theta\times\Theta}$, assume the material objective $u:A\times\Theta\to\real$ is given by $u(a,\theta):=\gamma_{\theta}-(a-\theta)^\top \Gamma (a-\theta)$, and the attention cost functional $c:\DT\to\real$ is given by $c(\nu):=\nu^\top \pi + (\nu-\mu)^\top \Pi (\nu-\mu)$, where $\mu\in\DT$ is the prior. This specification captures all finite-state, quadratic models in which the agent optimally chooses her expectation $a^*(\nu)=\sum_{\theta\in\Theta} \nu_\theta\theta$ of the state (and moreover, faces a Blackwell-monotone cost of information). The formulation of Section \ref{sec:2} corresponds to the case in which $J=1$, $\gamma=\pi=0$, $\Gamma=1$, and $\Pi=\kappa I_{\Theta}$ for some $\kappa>0$, where $I_\Theta$ is the identity matrix.
	
	Simple computations show that 
	\begin{equation*}
		U_A(\nu) = U_P(\nu) - c(\nu) =  a^*(\nu)^\top \Gamma a^*(\nu)- \nu^\top \Pi\nu
		+ {h(\nu)}
	\end{equation*}
	for some affine $h:\DT\to\real$. 
	Define now the seminorms $||\cdot||_M$ on $\real^J$ via $||y||_M:=\sqrt{y^\top \Gamma y}$ and $||\cdot||_P$ on $\real^\Theta$ via $||x||_M:=\sqrt{x^\top \Pi x}$. Given two beliefs $\nu,\nu'\in\DT$, we can define a \textbf{choice distance} between them as $|| \E_{\nu'}\theta - \E_{\nu}\theta||_M = ||a^*(\nu')-a^*(\nu)||_M$, and a \textbf{psychological distance} as $||\nu'-\nu||_P$; both of these formulae specialize to the forms given in our main model. Further replicating the computations before Proposition \ref{prop:IC}, for $\epsilon\in [0,1]$, 
	\begin{eqnarray*}
		\frac{1}{2}\frac{\mbox{d}^2}{\mbox{d} \epsilon^2}U_A(\nu+\epsilon(\nu^\prime-\nu))
		&=& || \E_{\nu'} \theta - \E_{\nu} \theta ||_M^2 - || \nu' - \nu||_P^2.
	\end{eqnarray*}
	Hence, the following generalization of Proposition~\ref{prop:IC} can be established.\footnote{Moreover, if $J=1$ and $\Gamma$ is nonzero, then Corollary~\ref{cor:IC} applies, with an identical proof. To generalize to the multidimensional-action case, one requires a more comprehensive notion of ``the direction of an information policy'' that involves every pair of supported posteriors, not only consecutive ones. Adopting an analogous, more stringent definition of an interior information policy---which in general makes the result less powerful by ruling out a smaller family of information policies---an analogue Corollary~\ref{cor:IC} holds with an identical argument. The analysis of Section~\ref{sec:3state} is, of course, specific to the fully parametric environment of that section. }
	
	\begin{customthm}{1*}\label{prop:IC-general}
		In the generalized quadratic setting, a nonredundant policy $p\in \mathcal{R}(\mu)$ is IC if and only if 
		\begin{equation}\label{eq:orderIC-general}
			|| \E_{\nu'} \theta - \E_{\nu} \theta ||_M \geq || \nu' - \nu||_P \quad \forall \nu,\nu'\in\supp(p).
		\end{equation} 
		Moreover, if $J=1$, and $\supp(p)=\{\nu^1,\ldots,\nu^{N}\}$ is such that $a^n:=\E_{\nu^n}(\theta)$ satisfy $a^1\leq\cdots\leq a^N$, then it suffices to verify \ref{eq:orderIC-general} only for the case that $\nu=\nu^n$ and $\nu'=\nu^{n+1}$ for some $n\in\{1,\ldots,N-1\}$.
	\end{customthm}

	\subsection{Proof of Proposition \ref{prop:IC}}

	\begin{lemma}\label{lem:convex}
		In the general quadratic model of Section~\ref{sec:genquad}, for any $\nu,\nu'\in\Delta\Theta$, the function $U_A|_{\textrm{co}\{\nu,\nu^\prime\}}$ is convex [resp. strictly concave] if and only if $|| \E_{\nu'} \theta - \E_{\nu} \theta ||_M \geq [<] || \nu' - \nu||_P$. In particular, in the model of Section~\ref{sec:2}, this ranking reduces to $| \E_{\nu'} \theta - \E_{\nu} \theta | \geq [<] \sqrt{\kappa}|| \nu' - \nu||$.
	\end{lemma}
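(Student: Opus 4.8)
The plan is to compute the one-dimensional restriction $\phi(\epsilon):=U_A(\nu+\epsilon(\nu'-\nu))$ for $\epsilon\in[0,1]$ and observe that, since $U_A$ is a quadratic function of $\nu$, $\phi$ is a univariate quadratic; hence it is either globally (weakly) convex or globally strictly concave according to the sign of its (constant) second derivative. First I would invoke the identity already derived just before Proposition~\ref{prop:IC} (and restated in the general setting in Section~\ref{sec:genquad}), namely
\begin{equation*}
\tfrac{1}{2}\phi''(\epsilon) = \|\E_{\nu'}\theta - \E_{\nu}\theta\|_M^2 - \|\nu'-\nu\|_P^2 ,
\end{equation*}
which requires only expanding $U_A(\nu)=a^*(\nu)^\top\Gamma a^*(\nu)-\nu^\top\Pi\nu+h(\nu)$, using the linearity of $a^*$ and of $h$, and noting that the affine part $h$ contributes nothing to the second derivative. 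Since the right-hand side does not depend on $\epsilon$, the sign of $\phi''$ is constant, so $\phi$ is convex exactly when $\|\E_{\nu'}\theta - \E_{\nu}\theta\|_M^2 \geq \|\nu'-\nu\|_P^2$, i.e. when $\|\E_{\nu'}\theta - \E_{\nu}\theta\|_M \geq \|\nu'-\nu\|_P$, and strictly concave exactly when the strict reverse inequality holds; equality gives $\phi$ affine, which is the boundary case included in ``convex.''

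Next I would specialize to the model of Section~\ref{sec:2}: there $J=1$, $\Gamma=1$, $\Pi=\kappa I_\Theta$, so $\|y\|_M=|y|$ and $\|x\|_P=\sqrt{\kappa}\,\|x\|$, and the criterion becomes $|\E_{\nu'}\theta-\E_{\nu}\theta|\geq\sqrt{\kappa}\,\|\nu'-\nu\|$ as claimed. This step is a one-line substitution once the general computation is in hand.

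The only mild subtlety — and the place where I would be most careful — is making the ``convex or strictly concave, never anything in between'' dichotomy precise: a priori a quadratic could be constant, but a genuinely constant $\phi$ still falls under ``convex,'' so the dichotomy in the lemma statement (convex vs.\ strictly concave) is exhaustive and the two cases are delineated exactly by the weak inequality versus its strict reverse. I do not expect a real obstacle here; the content is essentially the observation that the second derivative of a quadratic is a constant whose sign is given by the displayed difference of squared seminorms, together with bookkeeping that the affine term $h$ drops out. Everything else is the routine expansion already carried out in the main text.
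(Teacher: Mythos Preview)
Your proposal is correct and follows essentially the same approach as the paper's own proof: compute the constant second derivative of the one-dimensional restriction, identify it as the difference of squared seminorms, and conclude the convex/strictly-concave dichotomy, then specialize. The paper re-derives the expression for $U_A$ and the second-derivative identity inside the proof rather than invoking the display from Section~\ref{sec:genquad}, but the content is identical.
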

	\begin{proof}\hspace{.1cm}
Given any $\nu\in\DT$, the agent's first-order condition is clearly solved at $a=a^*(\nu):=\sum_{\theta\in\Theta}\nu_\theta \theta$, and the objective is concave in the chosen action because $\Gamma$ is positive semidefinite. Hence $a^*(\nu)$ is an optimal action. Therefore, 
\begin{eqnarray*}
U_A(\nu) &=& U_P(\nu) - c(\nu) \\
&=& \int \left\{ \gamma(\theta)-[a^*(\nu)-\theta]^\top \Gamma [a^*(\nu)-\theta] - \pi(\theta)\right\} \ddd\nu(\theta) -  \nu^\top \Pi \nu \\
&=& \int \left[ \gamma(\theta)-\theta^\top \Gamma \theta - \pi(\theta)\right] \ddd\nu(\theta) +  a^*(\nu)^\top \Gamma a^*(\nu) - \nu^\top \Pi \nu.
\end{eqnarray*}
Now, for $\nu,\nu'\in\Delta\Theta$ and $\epsilon\in[0,1]$, we have
		\begin{eqnarray*}
		\tfrac{1}{2}\tfrac{\dd^2}{\dd \epsilon^2}U_A(\nu+\epsilon(\nu^\prime-\nu))
		&=& \tfrac{1}{2}\tfrac{\dd^2}{\dd \epsilon^2}  \left\{ a^*\left(\nu+\epsilon(\nu'-\nu)\right)^\top \Gamma a^*\left(\nu+\epsilon(\nu'-\nu)\right) - \left[\nu + \epsilon(\nu'-\nu) \right]^\top \Pi \left[\nu + \epsilon(\nu'-\nu) \right] \right\}\\
		&=&  \left[ a^*(\nu')-a^*(\nu) \right]^\top \Gamma \left[ a^*(\nu')-a^*(\nu) \right] - (\nu'-\nu)^\top \Pi (\nu'-\nu)\\
		&=& || a^*(\nu')-a^*(\nu) ||_M^2 - || \nu' - \nu||_P^2.
		\end{eqnarray*}
The general result follows. Finally, to specialize the result to our main model, note that $|| \cdot ||_M = |\cdot|$ and $||\cdot||_P=\sqrt{\kappa}||\cdot||$ in this case.
	\end{proof}
	\medskip

Let us now prove Proposition~\ref{prop:IC-general}, from which Proposition~\ref{prop:IC} follows directly.

	\hspace{-1.25cm}\begin{proof}[Proof of Proposition \ref{prop:IC-general}]
		Fixing any nonredundant $p\in \mathcal{R}(\mu)$, we will first show that the following are equivalent:
		\begin{enumerate}
			\item $p$ is IC.
			\item $U_A|_{\textrm{co}\{\nu^\prime,\nu^{\prime\prime}\}}$ is weakly convex for all $\forall \nu',\nu'' \in \rm{supp}(p)$.
			\item $|| \E_{\nu''} \theta - \E_{\nu'} \theta ||_M \geq || \nu'' - \nu'||_P$ for all $\nu',\nu''\in\supp(p)$.
		\end{enumerate}
		That {(2) $\iff$ (3)} follows directly from Lemma \ref{lem:convex}, so we now turn to showing {(1) $\iff$ (2)}.
		
		{(1) $\Rightarrow$ (2)}: Suppose condition (2) does not hold.  As $U_A$ is quadratic, there then exist $\nu^\prime,\nu^{\prime\prime}\in \supp(p)$ such that $U_A|_{\textrm{co}\{\nu^\prime,\nu^{\prime\prime}\}}$ is strictly concave. Define the finite-support random posterior $q\in G(p)$ by $$q(\nu)=\begin{cases}
		0 &: \ \nu \in \{\nu',\nu''\}\\
		p(\nu^\prime)+p(\nu^{\prime\prime}) &: \ \nu=\tfrac{p(\nu^\prime)}{p(\nu^\prime)+p(\nu^{\prime\prime})}\nu^\prime+\tfrac{p(\nu^{\prime\prime})}{p(\nu^\prime)+p(\nu^{\prime\prime})}\nu^{\prime\prime} \\
		p(\nu) &: \ \text{otherwise.}
		\end{cases}$$		
		In words, $q$ is a random posterior that replaces $\nu^\prime,\nu^{\prime\prime}$ in $\supp(p)$ with their conditional mean. By construction, $p\succ^B q$, so that $q\in G(p)$. Also, since $U_A|_{\textrm{co}\{\nu^\prime,\nu^{\prime\prime}\}}$ is strictly concave, by Jensen's inequality,
		$$\int_{\Delta\Theta}^{}U_A\ddd q> \int_{\Delta\Theta}^{}U_A\ddd p.$$
		This implies that $p\notin G^*(p)$, i.e. (1) does not hold.\\
		
		{(2) $\Rightarrow$ (1)}: Suppose (2) holds. By Claims OA.4 and OA.6 of \cite{Lipnowski2020}, there exists some Blackwell-maximal element $q^*\in G^*(p)$. Since $\Theta$ is finite and $S:=\supp(p)$ is affinely independent, we know that $S$ is finite, and that the map $\beta:\Delta S\rightarrow\mbox{co}(S)$, defined by $p\mapsto \sum_{\nu\in S}^{}p(\nu)\nu$, is bijective.
		
		Assume (toward a contradiction) that $\exists \nu^*\in \supp(q^*)\setminus S$. As $\beta$ is bijective and $\nu^*\notin S$, we know that $\beta^{-1}(\nu^*)$ is not a point mass. So $\exists \nu^\prime,\nu^{\prime\prime}\in S$ and $\varepsilon>0$ such that $\beta^{-1}(\nu^\prime|\nu^*),\beta^{-1}(\nu^{\prime\prime}|\nu^*)>2\varepsilon$. By continuity, there exists a neighborhood $N\subseteq\mbox{co}(S)$ of $\nu^*$ such that:
		$$\beta^{-1}(\nu^\prime|\nu),\beta^{-1}(\nu^{\prime\prime}|\nu)>\varepsilon,\forall \nu\in N.$$
		Note that $q^*(N)>0$ as $\nu^*\in \supp(q^*)$. 
		
		Define $f_+, f_-:\Delta\Theta\rightarrow\Delta\Theta$ by $\nu\mapsto \nu+\varepsilon(\nu^\prime-\nu^{\prime\prime})\mathbf1_{\nu\in N}$ and $\nu\mapsto \nu-\varepsilon(\nu^\prime-\nu^{\prime\prime})\mathbf1_{\nu\in N}$, respectively.\footnote{Notice that since $\nu^*\notin S$, one can choose $\varepsilon$ and $N$ small enough to ensure $f_+$ and $f_-$ are well-defined $\DT$-valued maps.} Also, define
		$$q^\prime:=\tfrac{1}{2}q^*\circ f_+^{-1}+\tfrac{1}{2} q^*\circ f_-^{-1}\in \Delta\Delta\Theta.$$
		By construction, $q^\prime\succ^B q^*$. Moreover, since $U_A|_{\textrm{co}\{\nu^\prime,\nu^{\prime\prime}\}}$ is convex, Lemma \ref{lem:convex} tells us that $U_A|_{\textrm{co}\{\nu\pm\varepsilon(\nu^\prime-\nu^{\prime\prime})\}}$ is also convex, $\forall \nu\in N$. This implies
		$$\int_{\Delta\Theta}^{}U_A\ddd q^\prime\geq \int_{\Delta\Theta}^{}U_A\ddd q^*,$$
		so that $q^\prime\in G^*(p)$, contradicting maximality of $q^*$. Therefore, $\supp(q^*)\subseteq S$, i.e. $q^*\in \Delta S$. But $\beta$ is bijective and $\beta(q^*)=\mu=\beta(p)$, so that $q^*=p$. Hence, $p\in G^*(p)$, i.e. (1) holds.\\
		
Finally, having proved the three-way equivalence, we now specialize the model to the case of $J=1$. Because $||\cdot||_M$ is a seminorm on $\real$, it follows that $||\cdot||_M=\lambda|\cdot|$ for some $\lambda\in\real_+$. 
Let $N:=|\supp(p)|$, and write $\supp(p)=\{\nu^1,\ldots,\nu^N\}$, where $a^n:=a^*(\nu^n)$ satisfy $a^1\leq\cdots\leq a^N$. Suppose $||a^{n+1}-a^n||_M\geq ||\nu^{n+1}-\nu^n||_P$ for every $n\in\{1,\ldots,N-1\}$.

Fixing some $i,j\in\{1,\ldots,N\}$ with $i<j$, we now argue that $||a^j-a^i||_M\geq ||\nu^j-\nu^i||_P$. Consider two cases. First, suppose $a^j=a^i$. In this case, every $n\in\{i,\ldots, j-1\}$ has $a^i\leq a^n \leq a^{n-1} \leq a^j$, so that $||a^{n+1}-a^n||_M=0$, and hence $||\nu^{n+1}-\nu^n||_P=0$. Because $\nu^j-\nu^i=\sum_{n=i}^{j-1} (\nu^n-\nu^{n-1})$, the triangle inequality yields $||\nu^j-\nu^i||_P\leq \sum_{n=i}^{j-1} ||\nu^n-\nu^{n-1}||_P$, giving the desired inequality. Now focus on the complementary case that $a^j>a^i$. Direct computation then shows: \begin{eqnarray*}
			\frac{ \nu^{j} - \nu^{i}}{a^{j} - a^{i}} &=& \tfrac{1}{\sum_{\ell=i}^{j-1} \left( a^{\ell+1} - a^{\ell} \right)} \sum_{n=i}^{j-1} \left( \nu^{n+1} - \nu^{n} \right) \\
			&=& \sum_{n=i}^{j-1} \left[ \tfrac{a^{n+1} -a^n}{\sum_{\ell=i}^{j-1} \left( a^{\ell+1} - a^{\ell} \right)} \left( \frac{\nu^{n+1} - \nu^{n}}{a^{n+1} -a^n}\right) \right] \\
			&\in& {\rm{co}} \left\{ \frac{\nu^{n+1} - \nu^{n}}{a^{n+1} -a^n} \right\}_{n=i}^{j-1}.
		\end{eqnarray*}
		As a seminorm is convex, and therefore quasiconvex, it follows that $$\left\| \frac{ \nu^{j} - \nu^{i}}{a^{j} - a^{i}} \right\|_P \leq \max_{\ell\in\{i,\ldots,j-1\}} \left\| \frac{ \nu^{\ell+1} - \nu^{\ell}}{a^{\ell+1} - a^{\ell}} \right\|_P \leq \lambda,$$
		and hence $\left\| \nu^{j} - \nu^{i} \right\|_P \leq \lambda|a^j-a^i|= \left\| a^{j} - a^{i} \right\|_M$, as required.
	\end{proof}
\medskip

\hspace{-1.25cm}\begin{proof}[Proof of Corollary \ref{cor:IC}]
	By Observation \ref{obs:action}, we only need to show that any (nonredundant) policy that is IC and interior cannot be an optimal attention outcome.\footnote{Note that the definition of an interior policy involves its direction, which is only defined for nonredundant policies and which is distinct by definition for any two policies with different support size (see Section \ref{sec:3.3}). Hence, by definition, an interior policy is both nonredundant and informative.} Let $p$ be a policy that is IC and interior. By definition, there exists an alternative nonredundant policy $\tilde{p}$ with the same direction as $p$ such that $\tilde{p}\succ^B p$. Since the principal strictly prefers the agent to process more information, we are done if $\tilde{p}$ is IC. But by Proposition \ref{prop:IC}, preserving the direction of an IC information policy preserves IC.
 \end{proof}

\section{Online Appendix: Proof of Proposition \ref{prop:optimal}}
In this Online Appendix, we provide the complete proof of Proposition \ref{prop:optimal}, which characterizes the optimal disclosure policy in the three-state model.

	Information policies with two supported messages are called \textbf{binary policies}, and those with three supported messages are called \textbf{ternary policies}. With three states, any nonredundant, informative policy is either binary or ternary.
	
	In what follows, we represent the belief space $\DT$ parametrically. Let $$\B=\left\{(a,\z)\in \mathbb{R}^2:\ \z\in [0,1],\ a\in [\z-1,1-\z]\right\}.$$ 
	For any $(a,z)\in \B$, let 
	\begin{align*}
	\nu_{(a,\z)}=\z\delta_0+\frac{1-\z+a}{2}\delta_1+\frac{1-\z-a}{2}\delta_{-1}.
	\end{align*}
	So $\nu_{(a,\z)}\in\DT$ has $a=\E_{\theta\sim \nu_{(a,\z)}}(\theta)$ and $z= \nu_{(a,\z)}(0)$. By construction, the map $(a,\z)\mapsto \nu_{(a,\z)}$ is an affine bijection from $\B$ to $\DT$.
	Under this representation, Dirac measures in $\Delta\Theta$ are extreme points of $\B$: $\delta_{-1}=\nu_{(-1,0)}$, $\delta_{0}=\nu_{(0,1)}$, and $\delta_{1}=\nu_{(1,0)}$. Also, the prior satisfies $\mu=\nu_{(a_\mu,\mu_0)}$, where $a_\mu=\mu_1-\mu_{-1}$ and $(a_\mu,\mu_0)\in \mbox{int}(\B)$. Figure \ref{fig:abc} depicts the $(a,z)$ representation of the belief space. 
	
	For any distinct beliefs $\nu=\nu_{(a,\z)}$ and $\nu'=\nu_{(a',\z')}$, let $(\Delta a,\Delta \z)=(a-a',\z-\z')$. Note that
	\begin{align}\label{eqn:edgeIC3states}
	&|a-a'|\geq\sqrt{\kappa}||\nu-\nu'||\nonumber\\
	\iff& |a-a'|\geq \sqrt{\kappa}\sqrt{(z-z')^2+\left(\frac{1-z+a}{2}-\frac{1-z'+a'}{2}\right)^2+\left(\frac{1-z-a}{2}-\frac{1-z'-a'}{2}\right)^2}\nonumber\\
	\iff& \kappa\leq 2,\ \Delta a \neq 0, \text{ and } \left| \frac{\Delta z}{\Delta a} \right| \leq \sqrt{\frac{2-\kappa}{3\kappa}} =: s^*(\kappa).
	\end{align}
	By Proposition \ref{prop:IC}, a nonredundant $p\in \mathcal{R}(\mu)$ is IC, if and only if \ref{eqn:edgeIC3states} holds for any two consecutive messages in $\supp(p)$.
	Finally, we fix notation for line segments that form the boundary of $\B$: $\lt:=\co\{(0,1),(-1,0)\}$, $\rt:=\co\{(0,1),(1,0)\}$ and $\down:=\co\{(-1,0),(1,0)\}$.
		\begin{figure}[ht!]
		\centering
		\begin{tikzpicture}
		\draw (-2,0) -- (2,0) -- (0,2)--(-2,0);
		\node [below] at (0.7,0.7) {$(a_\mu,\mu_0)$};
		\node [above] at (2.8,0) {$a$};
		\node [right] at (0,2.5) {$z$};
		\node [below] at (-2,0) {$(-1,0)$};
		\node [right] at (0,2) {$(0,1)$};
		\node [below] at (2,0) {$(1,0)$};
		\node [right] at (1,1.1) {$\rt$};
		\node [left] at (-1,1.1) {$\lt$};
		\node [below] at (0,-0.1) {$\down$};
		\draw[>=stealth, dashed,->]  (0,0) -- (0,2.7);
		\draw[>=stealth, dashed,->]  (-2.7,0) -- (3,0);
		\draw[fill] (0.3,0.7) circle [radius=0.05];
		\end{tikzpicture}
		\caption{$(a,z)$-representation of $\Delta\Theta$}\label{fig:abc}
	\end{figure}
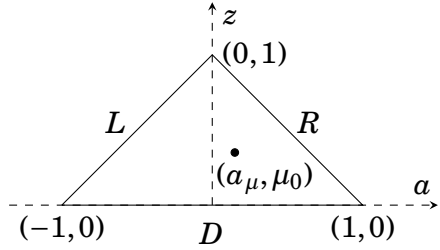

	\begin{claim}\label{cl11}
		Let $p\in \mathcal{R}(\mu)\setminus\{\delta_\mu\}$ be nonredundant and IC, with support $\{\nu_{(a^n,\z^n)}\}_{n=1}^m$ such that $a^1<\cdots< a^m$. If $(a^1,\z^1)\in \mbox{int}(\B)$ or $(a^m,\z^m)\in \mbox{int}(\B)$, then $p$ is interior.
	\end{claim}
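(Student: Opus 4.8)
The plan is to show that if an endpoint of $p$ (say $\nu_{(a^1,z^1)}$) lies in the interior of $\B$, then one can construct a new nonredundant policy $\tilde p$ with the \emph{same direction} as $p$ that is strictly more Blackwell informative, which is exactly the definition of $p$ being interior. The key observation I would exploit is the one emphasized in Section~\ref{sec:3.3}: the direction of a policy is invariant under affine transformations of the belief space, so sliding the extreme supported belief outward \emph{along the line through the two consecutive supported beliefs near it} preserves all the slopes that appear in the order-IC condition~\ref{eq:orderIC}. Thus I only need to check that such an outward slide is feasible, i.e., stays inside $\B$ while still being a valid mean-preserving spread of the mixture it must represent.

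\textbf{Steps.} First I would set up the parametric picture in $(a,z)$-coordinates, as in Figure~\ref{fig:abc}: the supported beliefs are points $(a^n,z^n)\in\B$ with $a^1<\cdots<a^m$, and $\mu=\nu_{(a_\mu,\mu_0)}\in\mathrm{int}(\B)$ is a convex combination of them with strictly positive weights (WLOG, since zero-weight points are not in the support). Second, assume $(a^1,z^1)\in\mathrm{int}(\B)$; I would consider moving this point along the ray from $(a^2,z^2)$ through $(a^1,z^1)$, away from $(a^2,z^2)$, to a new point $(\hat a^1,\hat z^1)$ still in $\B$ (possible for a small displacement precisely because $(a^1,z^1)$ is interior, so there is room in every direction, including this one). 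Third, I would adjust the probability weights: since $\mu$ lies in the relative interior of the convex hull of the original support and we have only pushed $(a^1,z^1)$ a little way along a line through $(a^2,z^2)$, $\mu$ still lies in the convex hull $\mathrm{co}\{(\hat a^1,\hat z^1),(a^2,z^2),\ldots,(a^m,z^m)\}$ — indeed one can keep the weights on $n\geq 3$ fixed and merely rebalance the weights between the first two points — yielding a bona fide policy $\tilde p\in\RP(\mu)$. Fourth, I would verify $\tilde p\succ^B p$: the supports satisfy $\mathrm{co}[\supp(p)]\subsetneq\mathrm{co}[\supp(\tilde p)]$ because the new point is strictly outside the old hull (it is beyond $(a^1,z^1)$ on the far side from $(a^2,z^2)$), and both are nonredundant, so the characterization $p\succeq^B q\iff\supp(q)\subseteq\mathrm{co}[\supp(p)]$ recalled after Lemma~\ref{thm:finite} gives $\tilde p\succeq^B p$; strictness follows because $p\not\succeq^B\tilde p$ (the new extreme point is not in $\mathrm{co}[\supp(p)]$). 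Finally, the direction of $\tilde p$ equals that of $p$: the slope $(\hat z^1-z^2)/(\hat a^1-a^2)$ equals $(z^1-z^2)/(a^1-a^2)$ by collinearity, and all other consecutive slopes are untouched; so by Proposition~\ref{prop:IC}, $\tilde p$ is IC, and $p$ is interior by definition. The symmetric argument handles the case $(a^m,z^m)\in\mathrm{int}(\B)$.

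\textbf{Main obstacle.} The one place requiring genuine care is Step~two/three: I must be sure the outward slide can be done \emph{while keeping the policy valid}, i.e., that there is a direction along the line through $(a^1,z^1)$ and $(a^2,z^2)$ in which one exits the old convex hull without exiting $\B$, and that $\mu$ remains expressible as a convex combination with nonnegative weights after the move. The first is immediate from interiority of $(a^1,z^1)$ (an open ball around it lies in $\B$, and the line through $(a^2,z^2)$ meets that ball on the side away from $(a^2,z^2)$). The second needs the remark that $\mu$, being in the relative interior of $\mathrm{co}\{(a^1,z^1),\ldots,(a^m,z^m)\}$, has all barycentric weights strictly positive, so a sufficiently small perturbation of a single vertex keeps $\mu$ inside the (slightly enlarged) hull; this is where I would be explicit that the displacement $\varepsilon$ is chosen small enough. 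Everything else is bookkeeping in the $(a,z)$ parametrization.
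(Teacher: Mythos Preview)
Your proposal is correct and follows essentially the same route as the paper: push the interior endpoint $(a^1,z^1)$ outward along the ray from $(a^2,z^2)$ through $(a^1,z^1)$, rebalance the first two weights to obtain a nonredundant mean-preserving spread $\tilde p$ of $p$, and observe that collinearity preserves the direction. One minor note: invoking Proposition~\ref{prop:IC} to establish that $\tilde p$ is IC is unnecessary, since the definition of ``interior'' requires only that $\tilde p$ share $p$'s direction and satisfy $\tilde p\succ^B p$.
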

	\begin{proof}\hspace{.1cm}
		Suppose that $(a^1,\z^1)\in\mbox{int}(\B)$.  For sufficiently small $\epsilon>0$, interiority of $(a^1,\z^1)$ implies that $$(\tilde a, \tilde \z):= (a^2,\z^2) + (1+\epsilon)(a^1-a^2,\z^1-\z^2)\in \B,$$ and $\{\nu_{(\tilde a, \tilde \z)}, \ \nu_{(a^2,\z^2)},\ldots,\nu_{(a^m,\z^m)}\}$ is still affinely independent.  There is then a mean-preserving spread $\tilde p$ of $p$ which is supported on $\{\nu_{(\tilde a, \tilde \z)}, \ \nu_{(a^2,\z^2)},\ldots,\nu_{(a^m,\z^m)}\}$.\footnote{To construct it, take a dilation on $\DT$ which fixes each of $\nu_{(a^2,\z^2)},...,\nu_{(a^m,\z^m)}$ and splits $\nu_{(a^1, \z^1)}$ to a measure with support $\{\nu_{(\tilde a, \tilde \z)}, \ \nu_{(a^2,\z^2)}\}$.}  As $\frac{\z^2-\tilde \z}{a^2-\tilde a} = \frac{\z^2-\z^1}{a^2-a^1}$ by construction, $\tilde{p}$ has the same direction as $p$. Since $\tilde{p}$ is a mean-preserving spread of $p$, by definition we have $\tilde{p}\succ^B p$, and so $p$ is interior. 
		
		A symmetric argument proves that $p$ is interior if $(a^m,\z^m)\in\mbox{int}(\B)$.
	\end{proof}
\bigskip
	
	Below, we rule out ternary policies such that (when parametrized in $\B$) the middle message lies below the line segment between the two other messages.
	
	\begin{claim}\label{cl12}
		Take any nonredundant ternary policy $p\in \mathcal{R}(\mu)$ with $\supp(p)=\left\{\nu_{(a_1,\z_1)},\nu_{(a_2,\z_2)},\nu_{(a_3,\z_3)}\right\}$ such that $a_1<a_2<a_3$. If $p$ is IC and $$\z_2\leq \tfrac{a_3-a_2}{a_3-a_1} \z_1+ \tfrac{a_2-a_1}{a_3-a_1} \z_3,$$ then $p$ is not principal-optimal.
	\end{claim}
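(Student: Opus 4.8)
The plan is to perturb $p$ into a nearby policy that induces the same distribution of actions, is still incentive compatible, but has one of its extreme supported beliefs pushed into $\mbox{int}(\B)$; Claim~\ref{cl11} and Corollary~\ref{cor:IC} then finish the argument. The first step is to translate the hypothesis and extract \emph{strict} slack in the IC constraints. Write $w_1,w_2,w_3>0$ for the weights of $p$, let $\sigma_1:=\frac{\z_2-\z_1}{a_2-a_1}$ and $\sigma_2:=\frac{\z_3-\z_2}{a_3-a_2}$ be the slopes between consecutive supported beliefs, and $\sigma_{13}:=\frac{\z_3-\z_1}{a_3-a_1}$ the chord slope. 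Incentive compatibility of $p$ gives $|\sigma_1|,|\sigma_2|\le s^*(\kappa)$ by Proposition~\ref{prop:IC} (in the form \ref{eqn:edgeIC3states}); the hypothesis is precisely $\sigma_1\le\sigma_{13}\le\sigma_2$ (the middle belief lies weakly below the chord); and nonredundancy excludes equality (which would make the three beliefs collinear), so $\sigma_1<\sigma_{13}<\sigma_2$. Consequently $\sigma_1<s^*(\kappa)$ and $\sigma_2>-s^*(\kappa)$, both strictly; the same strict inequality also yields $\z_2<1-|a_2|$.

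Next I would dispense with the trivial configurations. If $\nu_{(a_1,\z_1)}$ or $\nu_{(a_3,\z_3)}$ lies in $\mbox{int}(\B)$, then $p$ is interior by Claim~\ref{cl11}, and being IC it is not principal-optimal by Corollary~\ref{cor:IC}; so I may assume both outer beliefs lie on $\partial\B=\lt\cup\rt\cup\down$. They cannot both lie on $\down$: if $\z_1=\z_3=0$ then the chord is contained in $\down$, the hypothesis forces $\z_2\le 0$, hence $\z_2=0$, and the three supported beliefs are collinear --- contradicting nonredundancy. So at least one outer belief, say $\nu_{(a_k,\z_k)}$ with $k\in\{1,3\}$, lies on $\lt\cup\rt$, whence $\z_k=1-|a_k|>0$: there is room to decrease $\z_k$.

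The main step is the perturbation. For small $t>0$, let $p_t$ keep the actions $a_1,a_2,a_3$ and weights $w_1,w_2,w_3$ of $p$, but replace $\z_k$ by $\z_k-t$ and $\z_2$ by $\z_2+\tfrac{w_k}{w_2}t$ (leaving the remaining $\z$-coordinate fixed). The $w$-weighted averages of the $\z$- and $a$-coordinates are unchanged, so $p_t\in\mathcal{R}(\mu)$ as long as each perturbed belief stays in $\DT$; this holds for $t$ small because $0<\z_k-t<1-|a_k|$ and $0\le\z_2<1-|a_2|$ strictly, with the third belief untouched. Non-collinearity of the three supported beliefs is an open condition holding at $t=0$, so $p_t$ stays nonredundant and I can check IC via the consecutive-slope criterion of Proposition~\ref{prop:IC}: in either case $k=1$ or $k=3$ the perturbation only raises $\sigma_1$ and only lowers $\sigma_2$, so the strict slack found above keeps both within $[-s^*(\kappa),s^*(\kappa)]$ for $t$ small. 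Hence $p_t$ is nonredundant, IC, has the outer belief $\nu_{(a_k,\z_k-t)}$ in $\mbox{int}(\B)$ --- so it is interior by Claim~\ref{cl11} --- and induces the same distribution of actions as $p$. Corollary~\ref{cor:IC} (with Observation~\ref{obs:action}) then gives that $p$ is not principal-optimal.

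I expect the main obstacle to be geometric bookkeeping rather than any hard idea: extracting the \emph{strict} slack in the IC inequalities from nonredundancy (this is exactly what lets the one-sided perturbation stay IC), and ruling out the degenerate configuration in which both outer beliefs sit on $\down$. Once those are settled, verifying that $p_t$ remains a valid policy, nonredundant, IC, and action-distribution-preserving is routine.
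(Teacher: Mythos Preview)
Your proposal is correct and follows essentially the same approach as the paper's proof: perturb the $z$-coordinates (keeping actions and weights fixed) so as to push one outer supported belief into $\mbox{int}(\B)$ while preserving Bayes plausibility and IC, then invoke Claim~\ref{cl11} and Corollary~\ref{cor:IC}. The only cosmetic difference is that the paper normalizes $z_1\leq z_3$ (so it always perturbs the third belief), whereas you first dispose of the case where an outer belief is already interior and then select whichever outer belief has $z_k>0$; both routes lead to the same slope bookkeeping and the same conclusion.
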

	\begin{proof}\hspace{.1cm}
		First,	let us assume that $\z_1\leq \z_3$. Nonredundancy of $p$ rules out the possibility that $\z_2 = \tfrac{a_3-a_2}{a_3-a_1} \z_1+ \tfrac{a_2-a_1}{a_3-a_1} \z_3$; therefore, $$\z_2< \tfrac{a_3-a_2}{a_3-a_1} \z_1+ \tfrac{a_2-a_1}{a_3-a_1} \z_3\leq \z_3.$$ 
		Letting $s_\ell:= \frac{\z_2-\z_1}{a_2 - a_1}$ and $s_r:= \frac{\z_3-\z_2}{a_3 - a_2}$, observe that 
		$$ s_r-s_\ell = \tfrac{1}{a_3 - a_2} \z_3 + \tfrac{1}{a_2 - a_1}\z_1 - \left(\tfrac{1}{a_2 - a_1} + \tfrac{1}{a_3 - a_2} \right)\z_2 = \tfrac{a_3 - a_1}{(a_2 - a_1)(a_3 - a_2)}\left[ \tfrac{a_3-a_2}{a_3-a_1} \z_1+ \tfrac{a_2-a_1}{a_3-a_1} \z_3 - z_2 \right]>0.$$
		This implies $s_r>0$, because $$(a_2-a_1)s_\ell + (a_3-a_2)s_r  = \z_3-\z_1\geq 0.$$
		Letting $p_i:=p\left(\nu_{(a_i,\z_i)}\right)\in(0,1)$ for $i\in\{1,2,3\}$, consider a small $\epsilon>0$, and define
		\begin{align*}
		\left(\tilde{a}_1,\tilde{\z}_1\right)&=\left(a_1,\z_1\right)\\
		\left(\tilde{a}_2,\tilde{\z}_2\right)&=\left(a_2,\z_2+p_3\epsilon\right)\\
		\left(\tilde{a}_3,\tilde{\z}_3\right)&=\left(a_3,\z_3-p_2\epsilon\right).
		\end{align*}
		All statements in what follows should be taken to mean, \emph{when $\epsilon>0$ is sufficiently small}.

		Applying Jensen's inequality to the concave function $1-|\cdot|$ , $$\z_2< \tfrac{a_3-a_2}{a_3-a_1} \z_1+ \tfrac{a_2-a_1}{a_3-a_1} \z_3 \leq \tfrac{a_3-a_2}{a_3-a_1} (1-|a_1|)+ \tfrac{a_2-a_1}{a_3-a_1} (1-|a_3|) \leq 1-|a_2|.$$
		Combining this with $\z_3>z_2\geq 0$ tells us that $\left(\tilde{a}_2,\tilde{\z}_2\right),\left(\tilde{a}_3,\tilde{\z}_3\right)\in \mbox{int}\left(\B\right)$.  Next, define $\tilde p\in\Delta\DT$ to be the measure which puts mass $p_i$ on ${\nu}_{\left(\tilde{a}_i,\tilde{\z}_i\right)}$ for $i\in\{1,2,3\}$.  Direct computation shows that $\tilde p\in\RP(\mu)$ and that $\tilde p$ generates exactly the same action distribution as $p$.  
		Now, to show that $\tilde p$ is IC, observe that $\tfrac{\tilde{\z}_3-\tilde{\z}_2}{\tilde{a}_3-\tilde{a}_2} = \tfrac{\tilde{\z}_3-\tilde{\z}_2}{{a}_3-{a}_2} \in [0,s_r]$ and $\tfrac{\tilde{\z}_2-\tilde{\z}_1}{\tilde{a}_2-\tilde{a}_1} = \tfrac{\tilde{\z}_2-\tilde{\z}_1}{{a}_2-{a}_1} \in [s_\ell, s_r]$.   So $\tilde p$ is order-IC if $p$ is.  Appealing to Proposition \ref{prop:IC} says $\tilde p$ is IC. Finally, since $(\tilde{a}_3,\tilde{z}_3)\in \mbox{int}\left(\B\right)$, Claim \ref{cl11} tells us $\tilde{p}$ is interior. Hence, we have constructed an interior and incentive-compatible $\tilde{p}$ which generates the same action distribution as $p$. Thus, by Corollary \ref{cor:IC}, $p$ is not principal-optimal.
		
		A symmetric argument proves the same result in the case $\z_1\geq \z_3$.
	\end{proof}

Without loss of generality, \emph{we focus hereafter on the case that $a_\mu\geq 0$.}
	
	\begin{claim}\label{cl13}
		Suppose a nonredundant $p\in \mathcal{R}(\mu)$ is IC. If there exists $ \nu_{(a,z)}\in \supp(p)$ such that $(a,z)\in \mbox{relint}\left(\down\right)$, then $p$ is not principal-optimal.
	\end{claim}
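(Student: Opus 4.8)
The claim is that a nonredundant IC policy $p$ cannot be principal-optimal if some supported belief $\nu_{(a,z)}$ has $(a,z)\in\mbox{relint}(\down)$, i.e.\ $z=0$ and $a\in(-1,1)$. The strategy I would use is the perturbation argument of Corollary~\ref{cor:IC}: exhibit an interior, IC policy inducing the same action distribution (hence the same principal payoff, by Observation~\ref{obs:action}), which by Corollary~\ref{cor:IC} rules $p$ out. Since $p$ is nonredundant with three states, $p$ is binary or ternary, so I would split into cases by the support size and the position of the offending message in the action order $a^1<\cdots<a^m$.

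\textbf{Key steps.} First, dispose of the easy cases via Claim~\ref{cl11}: if the offending belief $\nu_{(a,z)}$ with $z=0$, $a\in(-1,1)$, is the leftmost ($a=a^1$) or rightmost ($a=a^m$) supported belief, then since $(a,0)$ with $a\in(-1,1)$ lies in $\mbox{int}(\B)$, Claim~\ref{cl11} immediately says $p$ is interior, so it is not principal-optimal. Second, handle the binary case: a binary policy has only two supported beliefs, both of which are then ``extreme'' in the action order, so the previous step already covers it. Third, the remaining case is a ternary policy whose \emph{middle} message is $\nu_{(a_2,0)}$ with $a_2\in(-1,1)$. Here I would invoke Claim~\ref{cl12}: with $z_2=0$ and $z_1,z_3\ge 0$, we have $z_2=0\le \tfrac{a_3-a_2}{a_3-a_1}z_1+\tfrac{a_2-a_1}{a_3-a_1}z_3$, so the hypothesis of Claim~\ref{cl12} is satisfied and $p$ is not principal-optimal. (One must note $z_2=0$ together with the convex-combination bound and nonredundancy forces the inequality to be strict unless $z_1=z_3=0$, in which case all three beliefs sit on $\down$, contradicting affine independence; so the ternary sub-case is genuinely covered.) Together these exhaust all nonredundant informative policies with a supported belief in $\mbox{relint}(\down)$.

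\textbf{Main obstacle.} The substantive content is really already packaged into Claims~\ref{cl11} and~\ref{cl12}; the work here is just the bookkeeping of verifying that every configuration with a relint-$\down$ belief falls under one of them, and in particular checking the degenerate boundary cases (e.g.\ what if two supported beliefs both lie on $\down$, or a would-be interior perturbation leaves $\B$). The one point requiring a little care is confirming that when the offending message is the middle one, the weak inequality in Claim~\ref{cl12}'s hypothesis does hold for the relevant orientation $z_1\le z_3$ or $z_1\ge z_3$ — but since $z_2=0$ is the minimum possible value of $z$, this is automatic. So I expect no real difficulty beyond careful case enumeration.
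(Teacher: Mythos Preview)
Your case split and the use of Claim~\ref{cl12} for the middle-message ternary case are correct and match the paper. However, there is a genuine gap in your treatment of the extreme-message cases (binary, or ternary with the offending belief at position $1$ or $m$).

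You assert that ``$(a,0)$ with $a\in(-1,1)$ lies in $\mbox{int}(\B)$,'' and then invoke Claim~\ref{cl11}. This is false: $\B$ is the closed triangle with vertices $(-1,0),(1,0),(0,1)$, so any point with $z=0$ lies on the edge $\down$ and hence on the \emph{boundary} of $\B$, not in its interior. Claim~\ref{cl11} therefore does not apply. Indeed, Claim~\ref{cl11} is proved by extending the extreme belief further out along its direction toward the adjacent belief; if the extreme belief already sits on the boundary edge $\down$, such an extension may immediately leave $\B$, so the argument of Claim~\ref{cl11} genuinely fails here.

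The paper handles precisely this case with an explicit vertical perturbation: assuming (after a symmetry reduction) that the leftmost belief $(a_1,0)$ is the one in $\mbox{relint}(\down)$, it replaces $(a_1,0)$ by $(a_1,p_2\epsilon)$ and $(a_2,z_2)$ by $(a_2,z_2-p_1\epsilon)$, leaving $(a_3,z_3)$ fixed. This keeps the action distribution unchanged, preserves Bayes-plausibility, and (because the slopes between adjacent beliefs move only toward zero or stay within the original range) preserves order-IC. Since the new leftmost belief $(a_1,p_2\epsilon)$ is now genuinely in $\mbox{int}(\B)$, Claim~\ref{cl11} applies to the perturbed policy, and Corollary~\ref{cor:IC} finishes. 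In short, your strategy of ``exhibit an interior IC policy with the same action distribution'' is exactly right, but the interior policy has to be \emph{constructed}; Claim~\ref{cl11} alone does not supply it when the extreme belief lies on $\down$.
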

	\begin{proof}\hspace{.1cm}
		Say $p$ has support $\{\nu_{(a_1,\z_1)},\ldots,\nu_{(a_m,\z_m)}\}$ such that $a_1<\cdots< a_m$ and $m\in\{1,2,3\}$.  
		First, let us assume $\z_1\leq z_m$.
		
		If $m=1$, then the result follows from $\mu_0>0$.  If $m=3$ and $\z_2\leq \tfrac{a_3-a_2}{a_3-a_1} \z_1+ \tfrac{a_2-a_1}{a_3-a_1} \z_3$, then the result follows from Claim \ref{cl12}. We focus on the case that $m\in\{2,3\}$; $z_1=0$; and if $m=3$, then $\z_2> \tfrac{a_2-a_1}{a_3-a_1} \z_3$.
		
		Letting $p_i:=p\left(\nu_{(a_i,\z_i)}\right)\in(0,1)$ for $i\in\{1,\ldots,m\}$, consider a small $\epsilon>0$ and, define, $$ \left(\tilde{a}_i,\tilde{\z}_i\right)=\begin{cases}
		\left(a_1,\z_1+p_2\epsilon \right) & \ i=1 \\
		\left(a_2,\z_2-p_1\epsilon \right) & \ i=2 \\
		\left(a_i,\z_i \right) & \ \text{otherwise}. \\
		\end{cases}$$
		All statements in what follows are taken to mean, \emph{when $\epsilon>0$ is sufficiently small}.  
		
		That $\z_2>0$ and $(a_1, z_1)\in \mbox{relint}\left(\down\right)$ imply $\left(\tilde{a}_1,\tilde{\z}_1\right),\left(\tilde{a}_2,\tilde{\z}_2\right)\in \mbox{int}\left(\B\right)$.  Defining $\tilde p\in\Delta\DT$ to be the measure which puts mass $p_i$ on ${\nu}_{\left(\tilde{a}_i,\tilde{\z}_i\right)}$ for $i=1,\ldots,m$, observe that $\tilde p\in\RP(\mu)$ and that $\tilde p$ generates exactly the same action distribution as $p$.  Just as in the proof of Claim \ref{cl12}, an appeal to Proposition \ref{prop:IC}, Claim \ref{cl11}, and Corollary \ref{cor:IC} means we need only show that $\tilde p$ is order-IC.
		
		But see that $\tfrac{\tilde\z_2-\tilde\z_1}{\tilde a_2- \tilde a_1} = \tfrac{\z_2-\z_1 - (p_1+p_2)\epsilon}{a_2- a_1} \in \left[ 0,\tfrac{\z_2-\z_1}{a_2-a_1}\right]$.  Moreover, in the ternary case, $\tfrac{\z_2-\z_1}{a_2-a_1} < \tfrac{\z_3-\z_2}{a_3-a_2}$ because $\z_2> \tfrac{a_2-a_1}{a_3-a_1} \z_3$, so that $\tfrac{\tilde\z_3-\tilde\z_2}{\tilde a_3- \tilde a_2} = \tfrac{\z_3-\z_2 + p_2\epsilon}{a_3- a_2} \in \left[\tfrac{\z_2-\z_1}{a_2-a_1},\ \tfrac{\z_3-\z_2}{a_3-a_2} \right]$. The result follows. 
		
		A symmetric argument proves the same result in the case $\z_1\geq \z_3$.
	\end{proof}

	\begin{lemma}\label{lem:optimalbinary}
		Take any nonredundant binary policy $p\in \mathcal{R}(\mu)$ with $\supp(p)=\left\{\nu_{(a_1,\z_1)},\nu_{(a_2,\z_2)}\right\}$ such that $a_1<a_2$. Suppose $(p,p)$ is a solution to \ref{eqn:designer} (i.e. $p$ is an optimal attention outcome). Then,
		\begin{enumerate}
			\item $(a_1,\z_1)\in \lt$, $(a_2,\z_2)\in \rt$;
			\item $\left|\frac{\z_2-\z_1}{a_2-a_1}\right|\leq s^*(\kappa)$
		\end{enumerate}
	\end{lemma}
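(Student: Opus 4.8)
The plan is to prove both claims by a perturbation/improvement argument, leveraging the tools already assembled: Observation~\ref{obs:action} (the principal ranks IC outcomes by the variance of the induced action), Proposition~\ref{prop:IC} (order-IC characterization), Claim~\ref{cl11} (interior posteriors make a policy interior, hence improvable), and Corollary~\ref{cor:IC}. Statement~2 is immediate: if $p$ is an optimal attention outcome, then in particular $p\in G^*(p)$, i.e.\ $p$ is IC, and since $p$ is a nonredundant binary policy, Proposition~\ref{prop:IC} (equivalently \ref{eqn:edgeIC3states}) forces $\kappa\le 2$ and $\bigl|\tfrac{\z_2-\z_1}{a_2-a_1}\bigr|\le s^*(\kappa)$. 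So the real content is Statement~1: the two supported posteriors must lie on the two ``upper'' edges $\lt$ and $\rt$ of $\B$.

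First I would use Claim~\ref{cl11} to argue that both $(a_1,\z_1)$ and $(a_2,\z_2)$ must lie on the boundary $\partial\B$: if either were interior, $p$ would be interior, hence (by Corollary~\ref{cor:IC}) not an optimal attention outcome. Next I would rule out that either posterior lies on the bottom edge $\down$. For a point in $\operatorname{relint}(\down)$ this is exactly Claim~\ref{cl13}. The corners $(\pm1,0)$ and $(0,1)$ are the Dirac measures $\delta_{\pm1},\delta_0$; these are allowed as supported posteriors, but I must check they are consistent with being on $\lt\cup\rt$ --- indeed $\delta_1=\nu_{(1,0)}\in\rt$, $\delta_{-1}=\nu_{(-1,0)}\in\lt$, and $\delta_0=\nu_{(0,1)}\in\lt\cap\rt$, so no corner violates the claim. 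It remains to handle posteriors on the \emph{relative interiors} of $\lt$ and $\rt$ and to pin down which posterior sits on which edge. Since $a_1<a_2$ and $\lt=\co\{(-1,0),(0,1)\}$ consists of points with $a\le 0$ while $\rt=\co\{(0,1),(1,0)\}$ consists of points with $a\ge 0$, once we know each posterior is on $\lt\cup\lnot\B^\circ\cup\rt$ minus $\operatorname{relint}(\down)$, the ordering $a_1<a_2$ together with the requirement that $\mu=\nu_{(a_\mu,\mu_0)}$ lie in the segment $\co\{\nu_{(a_1,\z_1)},\nu_{(a_2,\z_2)}\}$ (the mean-preserving condition) forces $(a_1,\z_1)\in\lt$ and $(a_2,\z_2)\in\rt$: a segment joining two points both on $\rt$, say, cannot pass through an interior prior, and more carefully, if $(a_1,\z_1)$ were strictly on $\rt$ then $a_1>0\ge$ possible $a_2$ only if $a_2>a_1>0$, but then the whole chord lies in $\{a>0\}$, contradicting $a_\mu\ge 0$ only when $a_\mu=0$; a short separate argument (or invoking $\mu\in\operatorname{int}(\B)$) closes the $a_\mu=0$ edge case.

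The main obstacle I anticipate is the last step --- cleanly excluding the configurations where both posteriors lie on the same upper edge, or where one lies on $\operatorname{relint}(\lt)\cup\operatorname{relint}(\rt)$ in a way compatible with the prior but still suboptimal. The natural tool is again a perturbation that slides a posterior along its edge (or off the edge into $\operatorname{int}(\B)$) while preserving the prior mean and the induced action distribution, then invoking Claim~\ref{cl11}/Corollary~\ref{cor:IC}; one must check that such a perturbation can be chosen to preserve order-IC (using that the admissible slopes form an interval, exactly as in the proofs of Claims~\ref{cl12} and~\ref{cl13}) and that it strictly relaxes the Blackwell constraint. I would organize this as: (a) both posteriors on $\partial\B$; (b) neither on $\operatorname{relint}(\down)$; (c) if a posterior is at a corner or in the relative interior of an upper edge, a prior-preserving perturbation along/off that edge produces an interior IC policy with the same action distribution unless the posterior is forced to its claimed edge by the $a_1<a_2$ ordering and the mean constraint. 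Assembling (a)--(c) yields $(a_1,\z_1)\in\lt$ and $(a_2,\z_2)\in\rt$, completing the proof.
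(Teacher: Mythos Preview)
Your approach is essentially the paper's: use Claim~\ref{cl11} (with Corollary~\ref{cor:IC}) to force both supported posteriors onto $\partial\B$, use Claim~\ref{cl13} to exclude $\operatorname{relint}(\down)$, and then use $\mu\in\operatorname{int}(\B)$ together with $a_1<a_2$ to place $(a_1,\z_1)\in\lt$ and $(a_2,\z_2)\in\rt$; Part~2 is just the IC inequality \ref{eqn:edgeIC3states}. The paper's entire proof of Part~1 is literally one sentence invoking these same ingredients.

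The ``main obstacle'' you flag is not an obstacle at all, and your proposed perturbation machinery for step~(c) is unnecessary. Once you know both posteriors lie in $\lt\cup\rt$, the exclusion of ``both on the same upper edge'' is immediate and purely convex-geometric: $\lt$ and $\rt$ are line segments, so any convex combination of two points on the same edge stays on that edge, whereas $\mu\in\operatorname{int}(\B)$ lies on neither. You already wrote this argument (``a segment joining two points both on $\rt$, say, cannot pass through an interior prior''), so you are done at that point---there is no residual case requiring a slide-along-the-edge perturbation. Likewise the $a_\mu=0$ case needs no separate treatment: if $(a_1,\z_1)\in\rt$ and $(a_2,\z_2)\in\lt$ then $a_1\ge 0\ge a_2$, contradicting $a_1<a_2$ outright. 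In short, your plan is correct but you should delete the last paragraph of anticipated difficulties; the argument is complete without it.
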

	\begin{proof}\hspace{.1cm}
		Claims \ref{cl11} and \ref{cl13} imply that $(a_1,\z_1),(a_2,\z_2)\in \lt\cup\rt$; that $p\in\RP(\nu_{(a_\mu,\mu_0)})$ and $(a_\mu,\mu_0)\in\mbox{int}(\B)$ then imply part (1).
		Part (2) follows from Proposition \ref{prop:IC} and equation \ref{eqn:edgeIC3states}.
	\end{proof}
	
	\begin{lemma}\label{lem:optimalternary}
		Suppose $\frac{1}{2}<\kappa< 2$. Take any nonredundant ternary policy $p\in \mathcal{R}(\mu)$ with $\supp(p)=\left\{\nu_{(a_1,\z_1)},\nu_{(a_2,\z_2)},\nu_{(a_3,\z_3)}\right\}$ such that $a_1<a_2<a_3$. Suppose $(p,p)$ is a solution to \ref{eqn:designer} (i.e. $p$ is an optimal attention outcome). Then,
		\begin{enumerate}
			\item $(a_2,\z_2)\in \mbox{int}\left(\B\right)$, $(a_1,\z_1)\in \lt$, $(a_3,\z_3)\in \rt$;
			\item $\frac{\z_2-\z_1}{a_2-a_1}=s^*(\kappa)$, $\frac{\z_3-\z_2}{a_3-a_2}=-s^*(\kappa)$.
		\end{enumerate}
	\end{lemma}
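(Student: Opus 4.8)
The plan is to feed the geometric reductions already in hand (Claims~\ref{cl11}--\ref{cl13}) into Corollary~\ref{cor:IC} and the variance criterion of Observation~\ref{obs:action} for Part~(1), and for Part~(2) to freeze the two extreme supported beliefs and optimize over the middle one, exploiting that the principal's payoff is a ``fractional-quadratic'' function of that middle belief.

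\emph{Part~(1).} Since $(p,p)$ solves \ref{eqn:designer}, $p$ is IC; being an IC policy that induces its own action distribution, $p$ cannot be interior, or Corollary~\ref{cor:IC} would fail. By the contrapositive of Claim~\ref{cl11}, $\nu_{(a_1,\z_1)},\nu_{(a_3,\z_3)}\notin\mbox{int}(\B)$; by Claim~\ref{cl13}, neither lies in $\mbox{relint}(\down)$; hence both lie in $\lt\cup\rt$. If both lay in $\lt$, the chord $\co\{\nu_{(a_1,\z_1)},\nu_{(a_3,\z_3)}\}$ would lie on $\lt$, so $\z_2\le 1-|a_2|$ would be at most the height of that chord over $a_2$, strictly so by nonredundancy, and Claim~\ref{cl12} would contradict optimality; symmetrically for $\rt$; and $\nu_{(a_1,\z_1)}\in\rt$ with $\nu_{(a_3,\z_3)}\in\lt$ is incompatible with $a_1<a_3$ since $a\le 0$ on $\lt$ and $a\ge 0$ on $\rt$. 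Hence $\nu_{(a_1,\z_1)}\in\lt$ and $\nu_{(a_3,\z_3)}\in\rt$ (the shared vertex $\nu_{(0,1)}$ being ruled out by the ``both on one edge'' step). Finally, if $\nu_{(a_2,\z_2)}\in\lt$, then $\nu_{(a_1,\z_1)}$ and $\nu_{(a_2,\z_2)}$ would be consecutive beliefs collinear on $\lt$, whose absolute slope is $1>s^*(\kappa)$ because $\kappa>\tfrac12$, contradicting IC via Proposition~\ref{prop:IC} and equation~\ref{eqn:edgeIC3states}; symmetrically $\nu_{(a_2,\z_2)}\notin\rt$; and $\nu_{(a_2,\z_2)}\notin\down$ by Claim~\ref{cl13} with $a_1<a_2<a_3$. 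So $\nu_{(a_2,\z_2)}\in\mbox{int}(\B)$.

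\emph{Part~(2).} Put $s_1:=\tfrac{\z_2-\z_1}{a_2-a_1}$ and $s_2:=\tfrac{\z_3-\z_2}{a_3-a_2}$; IC and \ref{eqn:edgeIC3states} give $s_1,s_2\in[-s^*(\kappa),s^*(\kappa)]$. A short computation yields $\z_2-\big(\mbox{height over }a_2\mbox{ of }\co\{\nu_{(a_1,\z_1)},\nu_{(a_3,\z_3)}\}\big)=\tfrac{(a_3-a_2)(a_2-a_1)}{a_3-a_1}(s_1-s_2)$, so nonredundancy forces $s_1\neq s_2$ and Claim~\ref{cl12} forces $s_1>s_2$; equivalently, $\nu_{(a_2,\z_2)}$ --- and hence $\mu$ --- lies strictly above that chord. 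Now freeze $\nu_{(a_1,\z_1)},\nu_{(a_3,\z_3)}$ and treat $\nu_{(a_2,\z_2)}$ as the variable: affine independence of the three beliefs lets the prior determine the weights $p_1,p_2,p_3$ by barycentric formulas in $\nu_{(a_2,\z_2)}$ (with $p_2$ the ratio of the heights of $\mu$ and of $\nu_{(a_2,\z_2)}$ over the chord). On the compact convex set of $\nu_{(a_2,\z_2)}$ compatible with $\nu_{(a_2,\z_2)}\in\B$, with $p_1,p_2,p_3>0$, and with \ref{eqn:edgeIC3states}, the principal's payoff $\V_p[a^*(\nu)]=\sum_i p_i a_i^2-a_\mu^2$ restricts along every line segment to a ratio of a convex quadratic to a positive affine function, which has no local maximum in the relative interior of the set or of any of its edges; hence $\V_p[a^*(\nu)]$ attains its maximum only at a vertex. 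By Part~(1) the optimum is not on $\partial\B$; positivity of the $p_i$ excludes the degenerate vertices (those where some weight vanishes or where the chord is reached); and the vertex where $s_1=-s^*(\kappa)$, $s_2=s^*(\kappa)$ lies below the chord, hence outside the feasible set. The only admissible vertex is the apex where \ref{eq:orderIC} binds on both consecutive pairs --- and if that apex failed to be a feasible vertex, the feasible set would have no admissible vertex at all, contradicting optimality of the ternary $p$. As the apex lies above the chord it has $s_1>s_2$, so necessarily $s_1=s^*(\kappa)$ and $s_2=-s^*(\kappa)$; since $p$ is optimal, $\nu_{(a_2,\z_2)}$ is this apex, which is the claim.

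The step I expect to be most delicate is the vertex argument in Part~(2): pinning down the feasible set for the middle belief (the constraints $p_1,p_2,p_3>0$ become linear-in-$\nu_{(a_2,\z_2)}$ inequalities on top of $\nu_{(a_2,\z_2)}\in\B$ and \ref{eqn:edgeIC3states}), checking carefully that $\V_p[a^*(\nu)]$ really has the claimed fractional-quadratic shape along every segment --- so that its maximizer is a genuine vertex rather than a non-vertex boundary point --- and enumerating the vertices to confirm that the doubly binding apex is the unique admissible one. By contrast, Part~(1) and the ``above the chord'' reduction are essentially mechanical consequences of Claims~\ref{cl11}--\ref{cl13} and Corollary~\ref{cor:IC}.
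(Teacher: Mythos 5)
Your Part (1) is essentially the paper's: both rest on Claims \ref{cl11} and \ref{cl13}, the only difference being that the paper rules out two supported beliefs on a single edge directly from Proposition \ref{prop:IC} (the edge slope is $1>s^*(\kappa)$ when $\kappa>\tfrac12$), whereas you route the two-extremes-on-one-edge case through Claim \ref{cl12}; either works. Part (2) is where you genuinely diverge. The paper argues locally: if, say, $\left|\frac{\z_2-\z_1}{a_2-a_1}\right|<s^*(\kappa)$, it perturbs $\nu_{(a_2,\z_2)}$ slightly along the line through $\nu_{(a_3,\z_3)}$, obtaining an IC policy whose action distribution is a strict mean-preserving spread of $p$'s, contradicting optimality via Observation \ref{obs:action}; the signs $+s^*(\kappa),-s^*(\kappa)$ are then pinned down by nonredundancy and Claim \ref{cl12}. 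You instead argue globally: freeze the extreme beliefs, treat the payoff as a function of the middle belief, note it is a convex quadratic over a positive affine function (heights over the chord), conclude the maximum over the feasible region is attained only at vertices, and enumerate vertices. This is closer in spirit to the paper's later Lemma \ref{lem:binorsym}, where objective \ref{eqn:affine} is shown affine in $a_2$ and pushed to the extremes of $\mathcal A$; your version buys a self-contained, perturbation-free proof of the lemma at the cost of a heavier feasibility and vertex analysis. Two points need shoring up for your route to be airtight: (i) the fractional-quadratic form yields ``maximum attained \emph{only} at a vertex'' only if constancy along edges is excluded (a constant ratio has non-strict interior maxima); here this can be verified---along vertical segments the payoff is strictly monotone in $\z_2$, and along non-vertical segments the substitution $u=\textrm{denominator}$ gives $Au+B+C/u$ with $A>0$, hence strictly monotone or strictly convex---but you assert rather than check it; and (ii) your feasible set must include the ordering constraint $a_1\leq a_2\leq a_3$ and be taken closed to be a compact convex polygon, since imposing \ref{eqn:edgeIC3states} at each extreme belief alone gives a non-convex double cone and $p_i>0$ is not closed. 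With those repairs, your vertex enumeration (boundary-of-$\B$ vertices excluded by Part (1), zero-weight vertices excluded by ternarity, the lower apex excluded as it lies weakly below the chord) is complete and delivers the claim; the paper's perturbation argument is simply shorter and sidesteps these bookkeeping issues.
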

	
	\hspace{-1.25cm}\begin{proof}\hspace{.1cm}
		Since $p$ is nonredundant and IC, Claims \ref{cl11} and \ref{cl13} imply that $(a_1,\z_1),(a_3,\z_3)\in \lt\cup\rt$. As $s^*(\kappa)<1$ when $\kappa>\tfrac12$, it cannot be (given Proposition \ref{prop:IC}) that $\lt$ contains two distinct beliefs in the support of $p$; and similarly for $\rt$. This implies that $(a_1,\z_1)\in \lt$, $(a_3,\z_3)\in \rt$, and $(a_2, \z_2)\in \B\setminus(\lt\cup\rt)$.  But Claim \ref{cl13} rules out $(a_2, \z_2)\in \down$ as well. This delivers part (1).

		Now we prove part (2).  To start, it follows from Proposition \ref{prop:IC} and equation \ref{eqn:edgeIC3states} that $\left| \frac{\z_2-\z_1}{a_2-a_1}\right| ,\  \left| \frac{\z_3-\z_2}{a_3-a_2}\right|\leq s^*(\kappa)$.

		We now establish that $\left|\frac{\z_2-\z_1}{a_2-a_1}\right|=\left|\frac{\z_3-\z_2}{a_3-a_2}\right|=s^*(\kappa)$. Assume (toward a contradiction) that this does not hold.  
		Let us first assume $\left|\frac{\z_2-\z_1}{a_2-a_1}\right|<s^*(\kappa)$. 
		Consider a small $\epsilon>0$, and define the three vectors,
		\begin{align*}
		\left(\tilde{a}_1,\tilde{\z}_1\right)&=\left(a_1,\z_1\right)\\
		\left(\tilde{a}_2,\tilde{\z}_2\right)&=(1+\epsilon)\left(a_2,\z_2\right) - \epsilon\left(a_3,\z_3\right)\\
		\left(\tilde{a}_3,\tilde{\z}_3\right)&=\left(a_3,\z_3\right).
		\end{align*}
		Taking $\epsilon>0$ to be sufficiently small, $\{\left(\tilde{a}_i,\tilde{\z}_i\right)\}_{i=1}^3$ will be three affinely independent vectors, all in $\B$, whose convex hull contains $(0,\mu_0)$; and $\left|\frac{\tilde\z_2-\tilde\z_1}{\tilde a_2-\tilde a_1}\right|<s^*(\kappa)$.  There is therefore a unique $\tilde p\in \RP(\mu)$ whose support is $\{\left(\tilde{a}_i,\tilde{\z}_i\right)\}_{i=1}^3$, and $\tilde p$ is order-IC, since $\left|\frac{\tilde\z_3-\tilde\z_2}{\tilde a_3-\tilde a_2}\right|=\left|\frac{\z_3-\z_2}{ a_3- a_2}\right|\leq s^*(\kappa)$.  Proposition \ref{prop:IC} guarantees that $\tilde p$ is IC.  But, by construction, the action distribution induced by $\tilde p$ is a strict mean-preserving spread of that induced by $p$.  Therefore, by Observation \ref{obs:action}, $p$ is not optimal, a contradiction.
		
		A symmetric argument derives the same contradiction in the case $\left|\frac{\z_3-\z_2}{a_3-a_2}\right|<s^*(\kappa)$. Therefore, we have $\left|\frac{\z_2-\z_1}{a_2-a_1}\right|=\left|\frac{\z_3-\z_2}{a_3-a_2}\right|=s^*(\kappa)$.

		Finally, nonredundancy implies that $\frac{\z_2-\z_1}{a_2-a_1}\neq\frac{\z_3-\z_2}{a_3-a_2}$, and Claim \ref{cl12} rules out the possibility that $\frac{\z_2-\z_1}{a_2-a_1}<0<\frac{\z_3-\z_2}{a_3-a_2}$.  Part (2) then follows.
	\end{proof}
	
	\bigskip
	
	We define three special information policies that will be used in the coming proofs.
	\begin{itemize}
		\item The ``full disclosure'' policy $p^F$ is such that $\supp(p^F)=\{\nu_{(-1,0)},\nu_{(0,1)},\nu_{(1,0)}\}$
		\item The ``no information'' policy $p^N$ is such that $\supp(p^N)=\{\mu\}$
		\item The orthogonal policy $p^O$ is such that $\supp(p^O)=\{\nu_{(\mu_0-1,\mu_0)},\nu_{(1-\mu_0,\mu_0)}\}$ 
	\end{itemize}

	The above lemmas can be combined into the following claim, which reduces our search for optimal attention outcomes to a single two-dimensional problem.
	
	\begin{claim}\label{cl14}
		Suppose $\tfrac12<\kappa< 2$, and suppose the nonredundant policy $p\in \mathcal{R}(\mu)$ is an optimal attention outcome.  Then there exist $(a_1,z_1), (a_2,z_2), (a_3,z_3)\in \B$ such that:
		\begin{itemize}
			\item $(a_1,\z_1)\in \lt$, $(a_3,\z_3)\in \rt$, and $a_1\leq a_2 \leq a_3$;
			\item $\z_2-\z_1=s^*(\kappa)(a_2-a_1)$, $\z_3-\z_2=-s^*(\kappa)(a_3-a_2)$;
			\item $p\{\nu_{(a_1,z_1)}, \nu_{(a_2,z_2)}, \nu_{(a_3,z_3)}\}=1$.
		\end{itemize}
	\end{claim}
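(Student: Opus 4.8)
The plan is to bolt together the two structural lemmas already proved — Lemma \ref{lem:optimalbinary} for binary policies and Lemma \ref{lem:optimalternary} for ternary ones — after first checking that an optimal nonredundant $p$ must in fact be informative. For the latter, note $p\neq\delta_\mu$ whenever $\kappa<2$: the orthogonal policy $p^O$ is informative, and since its two supported beliefs share the same $z$-coordinate its ``slope'' is $0\le s^*(\kappa)$, so \ref{eqn:edgeIC3states} together with Proposition \ref{prop:IC} makes $p^O$ IC; as $\V_{p^O}[a^*(\nu)]>0=\V_{\delta_\mu}[a^*(\nu)]$, Observation \ref{obs:action} shows $\delta_\mu$ is not an optimal attention outcome. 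Hence $p$ is a nonredundant, informative, IC policy, and with three states (as recorded at the start of this appendix) it is either binary or ternary.

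In the ternary case there is essentially nothing left to prove: writing $\supp(p)=\{\nu_{(a_n,z_n)}\}_{n=1}^3$ with $a_1<a_2<a_3$, Lemma \ref{lem:optimalternary} — whose hypothesis $\tfrac12<\kappa<2$ is exactly our range — yields $(a_1,z_1)\in\lt$, $(a_3,z_3)\in\rt$, and the slope equalities $z_2-z_1=s^*(\kappa)(a_2-a_1)$, $z_3-z_2=-s^*(\kappa)(a_3-a_2)$, which are precisely the three asserted bullet points (with $p$ supported on exactly those beliefs).

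The real work is the binary case. Here Lemma \ref{lem:optimalbinary} gives $\supp(p)=\{\nu_{(a_1,z_1)},\nu_{(a_3,z_3)}\}$ with $a_1<a_3$, $(a_1,z_1)\in\lt$, $(a_3,z_3)\in\rt$, and $|z_3-z_1|\le s^*(\kappa)(a_3-a_1)$. I would then manufacture a ``phantom'' middle belief $(a_2,z_2)$ as the apex of the tent erected over the chord from $(a_1,z_1)$ to $(a_3,z_3)$ with side slopes $\pm s^*(\kappa)$, i.e. the unique solution of $z_2-z_1=s^*(\kappa)(a_2-a_1)$ and $z_3-z_2=-s^*(\kappa)(a_3-a_2)$, namely $a_2=\tfrac{a_1+a_3}{2}+\tfrac{z_3-z_1}{2s^*(\kappa)}$ and $z_2=z_1+s^*(\kappa)(a_2-a_1)$. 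Four things then need checking: (a) $a_1\le a_2\le a_3$, which after substituting the formula for $a_2$ is literally the IC inequality $|z_3-z_1|\le s^*(\kappa)(a_3-a_1)$; (b) the two slope equalities, which hold by construction; (c) $\supp(p)\subseteq\{\nu_{(a_1,z_1)},\nu_{(a_2,z_2)},\nu_{(a_3,z_3)}\}$, which is automatic; and (d) $(a_1,z_1),(a_2,z_2),(a_3,z_3)\in\B$, where the first and third are given by Lemma \ref{lem:optimalbinary} and only $(a_2,z_2)\in\B$ requires an argument. This last point is the crux, and the one place the hypothesis $\kappa>\tfrac12$ is used: it forces $s^*(\kappa)<1$, so the tent's sides are flatter than the edges $\lt,\rt$ (which have slopes $\pm1$) and the apex cannot escape $\B$. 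Concretely, $z_2\ge z_1\ge0$ since $a_2\ge a_1$; and rewriting $z_2=1-a_3+s^*(\kappa)(a_3-a_2)$ when $a_2\ge0$ (resp. $z_2=1+a_1+s^*(\kappa)(a_2-a_1)$ when $a_2\le0$), the inequality $z_2\le 1-|a_2|$ collapses, after dividing by the positive quantity $1-s^*(\kappa)$, to $a_2\le a_3$ (resp. $a_1\le a_2$), which is already known; together with $z_2\ge0$ this places $(a_2,z_2)$ in $\B$.

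I expect step (d) of the binary case — keeping the inserted middle belief inside $\B$, which is exactly where $s^*(\kappa)<1$ is needed — to be the main obstacle; everything else is assembling the two lemmas together with Observation \ref{obs:action} and the binary-IC characterization in \ref{eqn:edgeIC3states}.
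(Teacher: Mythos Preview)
Your proposal is correct and follows essentially the same route as the paper: rule out $\delta_\mu$ via $p^O$, invoke Lemma~\ref{lem:optimalternary} in the ternary case, and in the binary case use Lemma~\ref{lem:optimalbinary} and then construct $(a_2,z_2)$ as the intersection of the two critical-slope lines, checking $a_1\le a_2\le a_3$ from the IC bound and $(a_2,z_2)\in\B$ from $s^*(\kappa)<1$. The only cosmetic differences are that the paper phrases the ordering check via an intermediate-value argument on a gap function and the membership check via ``lies below/above the line containing $L,R,D$'' rather than your explicit case split on the sign of $a_2$.
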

	
	\hspace{-1.25cm}\begin{proof}\hspace{.1cm}
		As $\kappa\leq 2$, the policy $p^O$ is IC, generating strictly higher principal payoffs than no information.  So $p$ must be informative.  Being nonredundant, it is either ternary or binary.
		
		In the case that $p$ is ternary, Lemma \ref{lem:optimalternary} delivers the result.
		
		In the remaining case, $p$ has binary support $\left\{\nu_{(a_1,\z_1)},\nu_{(a_3,\z_3)}\right\}$, where $(a_1,z_1),(a_3,z_3)\in \B$ with $a_1<a_3$.  Lemma \ref{lem:optimalbinary} then tells us that $(a_1,\z_1)\in \lt$, $(a_3,\z_3)\in \rt$, and $\left|\frac{\z_3-\z_1}{a_3-a_1}\right|\leq s$, where $s:= s^*(\kappa)$.  Moreover, $s\in (0,1)$ because $\tfrac12<\kappa< 2$.
		
		Now, the line of slope $s$ through $(a_1,\z_1)$ and the line of slope $-s$ through $(a_3,\z_3)$ have different slopes (since $s>0$), so that there they have a unique intersection point $(a_2,\z_2)\in \real^2$.  To prove the claim, then, all that remains is to show is that $a_1\leq a_2\leq a_3$ and $(a_2,\z_2)\in \B$.  
		
		Define, for $\tilde a_2\in\real$, the gap $g(\tilde a_2) = [\z_3 - s(\tilde a_2-a_3)] - [\z_1 + s(\tilde a_2-a_1)]$, which strictly decreases in $\tilde a_2$, so that $g^{-1}(0)= \{a_2\}$.  Since $\left|\frac{\z_3-\z_1}{a_3-a_1}\right|\leq s$, $g(a_3)\leq0\leq g(a_1)$.  The intermediate value theorem then says that $a_2\in [a_1, a_3]$.
		
		As $s\leq 1$, $a_2\geq a_1$, and $(a_1,\z_1)$ lies (weakly) below the line containing $\lt$, it follows that $(a_2,\z_2)$ lies below that line as well. 
		As $s\leq 1$, $a_2\leq a_3$, and $(a_3,\z_3)$ lies (weakly) below the line containing $\rt$, it follows that $(a_2,\z_2)$ lies below that line as well. 
		As $s\geq 0$, $a_2\geq a_1$, and $(a_1,\z_1)$ lies (weakly) above the line containing $\down$, it follows that $(a_2,\z_2)$ lies above that line as well.  Therefore $(a_2,\z_2)\in \B$.
	\end{proof}
	
	\begin{lemma}\label{lem:binorsym}
		If $\tfrac12<\kappa< 2$, then there exists an optimal attention outcome $p$ such that one of the following holds
		\begin{itemize}
			\item $p$ is binary: $\supp(p)= \{\nu_{(a_1, \z_1)},\ \nu_{(a_2, \z_2)}\}$ for some $(a_1, \z_1)\in \lt, \ (a_2, \z_2)\in\rt$.
			\item $p$ is ternary with critical slopes and support containing $\nu_{(1,0)}$:\footnote{Recall our normalization that $a_\mu\geq 0$.} $\supp(p)= \{\nu_{(a_1, \z_1)},\ \nu_{(a_2, \z_3)}, \ \nu_{(1, 0)}\}$ for some $(a_1, \z_1)\in \lt,\ (a_2,z_2)\in \mbox{int}({\B})$ with $\frac{\z_2-\z_1}{a_2-a_1}=s=\frac{z_2}{1-a_2}$
		\end{itemize}
		Moreover, policies of the second form exist if and only if $s^*(\kappa)>\frac{\mu_0}{1-a_\mu}$.
	\end{lemma}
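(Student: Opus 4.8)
The plan is to combine the structural reduction already obtained in Claim \ref{cl14} with one perturbation argument in the ternary case, and then to settle the existence criterion for the second form by an elementary geometric computation.

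First I would take, via Lemma \ref{thm:finite}, a nonredundant optimal attention outcome $p$. Since $\tfrac12<\kappa<2$ the orthogonal policy $p^O$ is IC and strictly beats no disclosure, so $p$ is informative, hence binary or ternary. If $p$ is binary, Lemma \ref{lem:optimalbinary} already puts its two posteriors on $\lt$ and $\rt$ --- the first alternative. So the real work is the ternary case, where Claim \ref{cl14} and Lemma \ref{lem:optimalternary} give $\supp(p)=\{\nu_{(a_1,z_1)},\nu_{(a_2,z_2)},\nu_{(a_3,z_3)}\}$ with $(a_1,z_1)\in\lt$, $(a_3,z_3)\in\rt$, $(a_2,z_2)\in\mbox{int}(\B)$, $a_1<a_2<a_3$, strictly positive weights, and critical slopes $\tfrac{z_2-z_1}{a_2-a_1}=s=-\tfrac{z_3-z_2}{a_3-a_2}$ where $s:=s^*(\kappa)\in(0,1)$. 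Everything reduces to showing $a_3=1$; then $p$ is of the second form.

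To get $a_3=1$, I would slide the right posterior along $\rt$ toward $\nu_{(1,0)}$. Writing $t_1:=-a_1$, $t_3:=a_3$, the pinned slopes make $(a_2,z_2)$ and the weights explicit functions of $(t_1,t_3)$ and $\mu$: one finds $a_2=\tfrac{(1-s)(t_1-t_3)}{2s}$, $z_2=1-\tfrac{(1-s)(t_1+t_3)}{2}$, with $p_1,p_2,p_3$ the barycentric coordinates of $\mu$ in the corresponding triangle; increasing $t_3$ with $t_1$ fixed preserves IC. By Observation \ref{obs:action} the principal's payoff equals, up to an additive constant, $\V_p[a^*(\nu)]=p_1t_1^2+p_2a_2^2+p_3t_3^2-a_\mu^2$, and the crux is the claim that this is strictly increasing in $t_3$: granting it, if $a_3<1$ then --- all weights being positive and $(a_2,z_2)$ interior --- a small increase in $t_3$ stays feasible and strictly raises the objective, contradicting optimality, so $a_3=1$. (Even a weak monotonicity suffices: the slide must terminate at $t_3=1$, or at a merge of $\nu_{(a_2,z_2)}$ with $\nu_{(a_1,z_1)}$ on $\lt$, or at the vanishing of a weight; since the middle posterior cannot reach $\lt$, $\rt$, or $\down$ while the incident slopes stay $\pm s$ with $s<1$, and $z_2>0$ always, the merge and the $p_2\to 0$ cases give a binary policy with one posterior on $\lt$ and one on $\rt$, whereas $p_1\to 0$ or $p_3\to 0$ would give an \emph{optimal} binary policy with an interior posterior, impossible by Lemma \ref{lem:optimalbinary}.) I expect this monotonicity --- verified by plugging the closed forms for $a_2,z_2,p_1,p_2,p_3$ into the payoff and differentiating in $t_3$ --- to be the main obstacle, as it is the one genuinely laborious computation.

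For the existence criterion: if $s>\tfrac{\mu_0}{1-a_\mu}$, the policy on $\{\nu_{(-1,0)},\nu_{(0,s)},\nu_{(1,0)}\}$ is of the second form ($\nu_{(-1,0)}\in\lt$; $\nu_{(0,s)}\in\mbox{int}(\B)$ as $0<s<1$; and $\tfrac{s-0}{0-(-1)}=s=\tfrac{s}{1-0}$), and its triangle $\{(a,z):0\le z\le s(1-|a|)\}$ contains $\mu=(a_\mu,\mu_0)$ in its interior because $0<\mu_0<s(1-a_\mu)\le s(1+a_\mu)$; representing $\mu$ there produces a genuine second-form policy. Conversely, in any second-form policy the condition $\tfrac{z_2}{1-a_2}=s$ puts $\nu_{(a_2,z_2)}$ on the line $z=s(1-a)$ through $\nu_{(1,0)}$, so the edge $[\nu_{(a_2,z_2)},\nu_{(1,0)}]$ lies on that line; $z_1=1+a_1$ together with $\tfrac{z_2-z_1}{a_2-a_1}=s$ and the ordering $a_1<a_2<1$ then forces $a_1<-\tfrac{1-s}{1+s}$, which places the third vertex $(a_1,1+a_1)$ strictly below that line. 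Hence the triangle's interior, and therefore $\mu$, lies strictly below it: $\mu_0<s(1-a_\mu)$, i.e. $s>\tfrac{\mu_0}{1-a_\mu}$.
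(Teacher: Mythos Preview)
Your overall architecture is the same as the paper's: reduce via Claim \ref{cl14} to the two-parameter family of ``critical-slope'' configurations, then argue that optimizing along a one-dimensional slice of this family lands you at one of the two announced forms.  Your existence criterion for the second form is also correct and essentially the paper's argument.

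The gap is your monotonicity claim.  You propose to fix $t_1=-a_1$ and verify by computation that the principal's payoff is strictly (or at least weakly) increasing in $t_3=a_3$.  When you actually carry out that computation you will find something different: the payoff is \emph{affine} in $t_3$, with slope proportional to
\[
2a_1(1-s)+1-\mu_0+(2s-1)a_\mu,
\]
whose sign depends on $a_1$ (and on $s,\mu$).  For instance, when $s<\tfrac12$ and $a_1$ is close to $-1$ this slope is negative, so sliding $t_3$ upward \emph{lowers} the objective.  Hence ``strict increase in $t_3$'' is simply false in general, and your parenthetical fallback to ``weak monotonicity'' still only contemplates the upward slide.  As written, the argument does not exclude an optimal ternary $p$ sitting at a point where the affine function is strictly decreasing in $t_3$.

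The paper handles this cleanly by parametrizing the same family by $(a_1,a_2)$, computing the objective in closed form, and observing it is affine in $a_2$ (equivalently, in your $t_3$).  Affinity---not monotonicity---is the right statement: it guarantees that \emph{some} optimum lies at an extreme of the feasible interval in $a_2$, and then one checks that every such extreme is either binary (with posteriors on $\lt$ and $\rt$) or has $(a_3,z_3)=(1,0)$.  Your perturbation approach can be repaired along the same lines: once you discover affinity, you must slide in whichever direction the sign dictates (or, if the slope vanishes, slide to either endpoint), and then your endpoint analysis---which is fine---finishes the job.
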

	
	\hspace{-1.25cm}\begin{proof}\hspace{.1cm}
		Here, $s:= s^*(\kappa)\in (0,1)$ because $\tfrac12<\kappa< 2$. 
		
		Define the set $\mathcal T\subseteq \real^6$ of tuples $((a_i,\z_i))_{i=1}^3\subseteq B^3$ such that:\begin{itemize}
			\item $(a_1,\z_1)\in \lt$, $(a_3,\z_3)\in \rt$, and $a_1\leq a_2 \leq a_3$;
			\item $\z_2-\z_1=s(a_2-a_1)$, $\z_3-\z_2=-s(a_3-a_2)$;
			\item $(a_\mu,\mu_0)\in\co\{(a_1,z_1),\ (a_2,z_2),\ (a_3,z_3)\}$.
		\end{itemize}
		As $s>0$, it follows that every element $t\in\mathcal T$ has $\{t_1,t_2,t_3\}$ an affinely independent subset of $\B$.\footnote{In the extreme case where $a_2=a_1$ (or $a_2=a_3$), $(a_2,z_2)$ and $(a_1,z_1)$ (or $(a_2,z_2)$ and $(a_3,z_3)$) collapse to one point, so that $\{t_1,t_3\}$ is still an affinely independent subset of $\B$.}  Given that $(a_\mu,\mu_0)\in \co \{t_1,t_2,t_3\}$, it follows that there is a unique $p^{\mathcal{T}}_t \in \RP(\mu)\cap \Delta\{\nu_{t_1},\nu_{t_2},\nu_{t_3}\}$.
		The policy $p^{\mathcal{T}}_t$ is IC by Proposition \ref{prop:IC}.  Moreover, Claim \ref{cl14} tells us that any nonredundant optimal attention outcome is of this form.  Accordingly, we can reformulate the principal's problem as $\max_{t\in \mathcal T} \int_\DT U_P \ddd p^{\mathcal{T}}_t$. 
		
		Toward a parametrization of $\mathcal T$, let $\mathcal A:= \{(a_1, a_2):\ t=(a_1, \z_1, a_2, \z_2, a_3, \z_3)\in \mathcal T\}.$
		
		Given $t=(a_1, \z_1, a_2, \z_2, a_3, \z_3)\in \mathcal T$, we can infer the following:  \begin{itemize}
			\item That $(a_1,\z_1)\in \lt$ implies $\z_1=1+ a_1$.
			\item Then, $\z_2 = \z_1 + s(a_2-a_1)= 1+ (1-s)a_1 + s a_2$.
			\item Finally, $(a_3, \z_3)$ belongs both to $\rt$ (so that $\z_3=1-a_3$) and to the line of slope $-s$ through $(a_2,\z_2)$---uniquely pinning it down as these two lines have different slopes.  Direct computation then shows that $(a_3,\z_3)=(-a_1-\tfrac{2s}{1-s}a_2, \ 1+a_1+\tfrac{2s}{1-s}a_2).$
		\end{itemize}

		So, for any $(a_1,a_2)\in \real^2$, let $$t(a_1,a_2):= (a_1,\ 1+ a_1,\ a_2, \ 1+ (1-s)a_1 + s a_2,\ -a_1-\tfrac{2s}{1-s}a_2, \ 1+a_1+\tfrac{2s}{1-s}a_2) \in \real^6. $$
		Consistent with the previous notation, let
		\begin{align*}
		t_1(a_1,a_2)&=(a_1,\ 1+ a_1),\\
		t_2(a_1,a_2)&=(a_2,1+ (1-s)a_1 + s a_2),\\
		t_3(a_1,a_2)&=(-a_1-\tfrac{2s}{1-s}a_2, \ 1+a_1+\tfrac{2s}{1-s}a_2).
		\end{align*} 
		The above derivations show that $\mathcal T=\{t(a_1,a_2)\}_{(a_1,a_2)\in \mathcal A}$. Hence, we can view the principal's problem as $\max_{(a_1,a_2)\in \mathcal A} \int_\DT U_P \ddd p^{\mathcal{T}}_{t(a_1,a_2)}$.
		
		Now we consider two cases separately.
		
		\begin{itemize}
			\item If $(a_1,a_2)\in \mathcal A$ and $t_2(a_1,a_2)\neq t_3(a_1,a_2)$ (i.e. $a_2\neq -\frac{1-s}{1+s}a_1$), then 
			$$a_1\left(1-s\right)+a_2\left(1+s\right)\neq 0.$$

			Moreover, $a_1<a_2$.\footnote{This is because, if $a_1=a_2$, then $(a_1,\z_1)=(a_2,\z_2)=(0,1)$. But then, the fact that $(a_3,\z_3)\in \rt$ implies that $(a_\mu,\mu_0)\in \rt$, contradicting $(a_\mu,\mu_0)\in \mbox{int}(\B)$.}  Therefore, given $(a_1,a_2)\in \mathcal A$, the following three numbers are all well-defined: \begin{eqnarray*}
				p_2(a_1,a_2)&:=&\frac{a_1(a_1+1-\mu_0)(1-s)+a_2(2a_1+1-\mu_0-a_\mu)s}{-s(a_2-a_1)\left[a_1\left(1-s\right)+a_2\left(1+s\right)\right]},\\
				p_3(a_1,a_2)&:=&\frac{\left(1-s\right)\left[sa_\mu+\left(1-s\right)a_1+(1-\mu_0)\right]}{-2s\left[a_1\left(1-s\right)+a_2\left(1+s\right)\right]}, \\
				p_1(a_1,a_2)&:=&1-p_2(a)-p_3(a).
			\end{eqnarray*}
			Observe that $p_1(a)+p_2(a)+p_3(a)=1$ and (by direct, tedious computation) $\sum_{i=1}^3 p_i(a)(a_i,\z_i) = (a_\mu,\mu_0)$.
			The affine independence property of $\{t_i(a)\}_{i=1}^3$ then tells us that $p^{\mathcal{T}}_{t(a)}= \sum_{i=1}^3 p_i(a) \delta_{\nu_{t_i(a)}}$.
			
			Now, the principal objective can be expressed (by direct, more tedious computation) as:\footnote{The ``$\mathbb V_{\theta\sim\mu}(\theta)+a_\mu^2$'' term is an irrelevant constant, which we add for convenience.  The first equality follows directly from Observation \ref{obs:action}.} 
			\begin{align}\label{eqn:affine}
			\int_\DT U_P \ddd p^{\mathcal{T}}_{t(a_1,a_2)} +\mathbb V_{\theta\sim\mu}(\theta)+a_\mu^2 &=  \mathbb E_{\nu\sim p^{\mathcal{T}}_{t(a_1,a_2)}}[a^*(\nu)^2] \nonumber\\
			&= p_1(a_1,a_2) a_1^2 + p_2(a_1,a_2) a_2^2 + p_3(a_1,a_2) \left( -a_1-\tfrac{2s}{1-s}a_2 \right)^2 \nonumber\\
			&= a_1^2-\frac{a_1^2+(1-\mu_0)a_1}{s}-\left(2a_1+\frac{1-\mu_0}{1-s}+\frac{2s-1}{1-s}a_\mu\right)a_2.
			\end{align}
			
			\item  If $(a_1,a_2)\in \mathcal A$ and $t_2(a_1,a_2)= t_3(a_1,a_2)$ (i.e. $a_2= -\frac{1-s}{1+s}a_1$),
			that $(a_\mu,\mu_0)\in \co\{t_1(a),t_2(a)\}$, $t_1(a)\in \lt$ and $t_2(a)\in \rt$ imply that
			\begin{align*}
			(a_1,a_2)&=\left(-\frac{1-\mu_0+sa_\mu}{1-s},\frac{1-\mu_0+sa_\mu}{1+s}\right), \\
			p^{\mathcal{T}}_{t(a_1,a_2)}(\nu_{t_1})&= \frac{1-s}{2}\frac{1-\mu_0-a_\mu}{1-\mu_0+sa_\mu},\\ p^{\mathcal{T}}_{t(a_1,a_2)}(\nu_{t_2})&= \frac{1+s}{2}\frac{1-\mu_0+a_\mu}{1-\mu_0+sa_\mu}.
			\end{align*}
			
			The value of the principal's objective is now $$\int_\DT U_P \ddd p^{\mathcal{T}}_{t\left(-\frac{1-\mu_0+sa_\mu}{1-s},\frac{1-\mu_0+sa_\mu}{1+s}\right)} +\mathbb V_{\theta\sim\mu}(\theta)+a_\mu^2
			= p_1 \left(-\tfrac{1-\mu_0+sa_\mu}{1-s}\right)^2 + p_2 \left(\tfrac{1-\mu_0+sa_\mu}{1+s}\right)^2 = \tfrac{(1-\mu_0)^2-s^2a_\mu^2}{1-s^2},$$
			which one can directly verify is consistent with the value of equation \ref{eqn:affine} at $(a_1,a_2)=\left(-\frac{1-\mu_0+sa_\mu}{1-s},\frac{1-\mu_0+sa_\mu}{1+s}\right)$.
		\end{itemize}
		
		Therefore, equation \ref{eqn:affine} summarizes the principal's payoff for all $(a_1,a_2)\in \mathcal{A}$. Observe that this objective is affine in $a_2$.  But, $t(\cdot)$ being continuous and $\mathcal T$ being compact, the set of $a_2$ such that $(a_1,a_2)\in \mathcal A$ is (for fixed $a_1$) a compact set of real numbers.  We may therefore find a principal-optimal attention outcome by restricting attention to the case that $a_2$ is the largest or smallest possible number for which $(a_1,a_2)\in \mathcal A$.  Letting $\mathcal A^*\subseteq \mathcal A$ be the set of pairs with this property, we can view the principal's problem as $$\max_{(a_1,a_2)\in \mathcal A^*} \int_\DT U_P \ddd p^{\mathcal{T}}_{t(a_1,a_2)}.$$
		
		What does $p^{\mathcal{T}}_{t(a)}$ look like if $(a_1,a_2)\in \mathcal A^*$?  As $t(\cdot)$ is continuous, any $(a_1, \z_1, a_2, \z_2, a_3, \z_3) \in \mathcal T$ with $(a_3,\z_3)\in \mbox{relint}(\rt)$, $(a_2,\z_2) \in \mbox{int}\left(\B\right)$ and $\mu_0\in \mbox{int}\left(\co\{(a_i,\z_i)\}_{i=1}^3\right)$ cannot have $(a_1,a_2)\in \mathcal A^*$.  The reason is that, from the definition of $\mathcal T$, it would then contain $t(a_1,a_2\pm\epsilon)$ for sufficiently small $\epsilon$. So, if $(a_1,a_2)\in \mathcal A^*$, then  $(a_3,\z_3)=(1,0)$,\footnote{It cannot be that $(a_3,\z_3)=(0,1)$ because $a_\mu\geq 0$ and $(a_\mu,\mu_0)\in \mbox{int}(\B)$.} or $(a_2,\z_2)$ belongs to the boundary of $\B$, or $\mu_0$ belongs to the boundary of $\co\{(a_i,\z_i)\}_{i=1}^3$.  
		
		If $(a_3,\z_3)=(1,0)$, then we have established part (ii) of this lemma.
		
		If $\mu_0$ belongs to the boundary of $\co\{(a_i,\z_i)\}_{i=1}^3$, it must be that $p_i(a_1,a_2)=0$ for some $i$, then $p^{\mathcal{T}}_{t(a)}$ has binary support.
		
		If $(a_2,\z_2)$ belongs to the boundary of $\B$, since $a_1< a_2$ and $0<s<1$, it cannot be that $a_2\in \lt\cup\down$.  Since $-s\neq -1$, it can only be that $(a_2,\z_2)\in\rt$ if $a_2=a_3=-a_1-\tfrac{2s}{1-s}a_2$, i.e. if $a_2=-\tfrac{1-s}{1+s}a_1$. But then $p^{\mathcal{T}}_{t(a)}$ also has binary support.
		
		So any $(a_1,a_2)\in \mathcal A^*$ has $p^{\mathcal{T}}_{t(a)}$ either binary, or ternary with $\nu_{(1,0)}$ in the support (and critical slopes), and the result follows.
		
		Finally, for any $t\in \mathbb{R}^6$ s.t. $(a_3,z_3)=(1,0)$, $(a_1,\z_1)\in \lt$ and $\frac{\z_2-\z_1}{a_2-a_1}=s=\frac{z_2}{1-a_2}$, we have $(a_\mu,\mu_0)\in \mbox{int}\left[\co\{(a_1,z_1),\ (a_2,z_2),\ (a_3,z_3)\}\right]$ if and only if $s> \frac{\mu_0}{1-a_\mu}$. So if $s\leq \frac{\mu_0}{1-a_\mu}$, either $t\notin \mathcal{T}$ or $p_t^\mathcal{T}$ is binary; while if $s> \frac{\mu_0}{1-a_\mu}$, $p_t^\mathcal{T}$ is of the second form in this lemma. (See Figures \ref{fig:bigsvaryx2} and \ref{fig:smallsvaryx2} for illustration.)  The last part of Lemma \ref{lem:binorsym} follows.
	\end{proof}	

	\bigskip
	
	Graphically, when $s>\frac{\mu_0}{1-a_\mu}$, for any given $a_1$, varying $a_2$ leads to policies depicted in Figure \ref{fig:bigsvaryx2} (following the notation of Lemma \ref{lem:binorsym}'s proof). On the other hand, when $s\leq \frac{\mu_0}{1-a_\mu}$, any $(a_1,a_2)\in \mathcal A^*$ leads to a binary policy, as depicted in Figure \ref{fig:smallsvaryx2}. 
	\begin{figure}[ht!]
		\centering
		\begin{tikzpicture}[scale=0.8]
		\draw (-2,0) -- (2,0) -- (0,2)--(-2,0);
		\node [above] at (0.3,0.65) {$\mu$};
		\node [below] at (2.8,0) {$a$};
		\node [right] at (0,2.8) {$z$};
		\node [below] at (0,-0.2) {$(a_1,a_2)\in \mathcal A^*$};
		\draw[>=stealth, dashed,->]  (0,0) -- (0,3);
		\draw[>=stealth, dashed,->]  (-2.7,0) -- (3,0);
		\draw[fill] (0.3,0.7) circle [radius=0.05];
		\draw[fill,red] (2,0) circle [radius=0.05];
		\draw[fill,red] (-0.02,1.6) circle [radius=0.05];
		\draw[fill,red] (-1.8,0.2) circle [radius=0.05];
		\draw [red] (-1.8,0.2) -- (-0.02,1.6) -- (2,0)-- (-1.8,0.2);
		\draw [dashed,blue] (0.0978255,1.7182604) -- (1.0173913,0.9826078);
		\draw [dashed,blue] (0.0489123,1.6791302) -- (1.4086956,0.5913039);
		\end{tikzpicture}
		\begin{tikzpicture}[scale=0.8]
		\draw (-2,0) -- (2,0) -- (0,2)--(-2,0);
		\node [above] at (0.3,0.65) {$\mu$};
		\node [below] at (2.8,0) {$a$};
		\node [right] at (0,2.8) {$z$};
		\node [below] at (0,-0.2) {$(a_1,a_2)\notin \mathcal A^*$};
		\draw[>=stealth, dashed,->]  (0,0) -- (0,3);
		\draw[>=stealth, dashed,->]  (-2.7,0) -- (3,0);
		\draw[fill] (0.3,0.7) circle [radius=0.05];
		\draw[fill,red] (0.0489123,1.6791302) circle [radius=0.05];
		\draw[fill,red] (-1.8,0.2) circle [radius=0.05];
		\draw[fill,red] (1.4086956,0.5913039) circle [radius=0.05];
		\draw [red] (-1.8,0.2) -- (0.0489123,1.6791302) -- (1.4086956,0.5913039)-- (-1.8,0.2);
		\draw [dashed,blue] (0.0978255,1.7182604) -- (1.0173913,0.9826078);
		\draw [dashed,blue] (2,0) -- (-0.02,1.6);
		\end{tikzpicture}
		\begin{tikzpicture}[scale=0.8]
		\draw (-2,0) -- (2,0) -- (0,2)--(-2,0);
		\node [above] at (0.3,0.65) {$\mu$};
		\node [below] at (2.8,0) {$a$};
		\node [right] at (0,2.8) {$z$};
		\node [below] at (0,-0.2) {$(a_1,a_2)\in \mathcal A^*$};
		\draw[>=stealth, dashed,->]  (0,0) -- (0,3);
		\draw[>=stealth, dashed,->]  (-2.7,0) -- (3,0);
		\draw[fill] (0.3,0.7) circle [radius=0.05];
		\draw[fill,red] (0.0978255,1.7182604) circle [radius=0.05];
		\draw[fill,green] (-1.8,0.2) circle [radius=0.05];
		\draw[fill,green] (1.1,0.9) circle [radius=0.05];
		\draw [red] (-1.8,0.2) -- (0.0978255,1.7182604) --  (1.1,0.9);
		\draw [green] (-1.8,0.2) --   (1.1,0.9);
		\draw [dashed,blue] (0.0489123,1.6791302) -- (1.4086956,0.5913039);
		\draw [dashed,blue] (2,0) -- (-0.02,1.6);
		\end{tikzpicture}
		\caption{When $s>\frac{\mu_0}{1-a_\mu}$, for a fixed $a_1$, varying $a_2$ such that $(a_1,a_2)\in \mathcal A.$}
		\label{fig:bigsvaryx2}
	\end{figure}
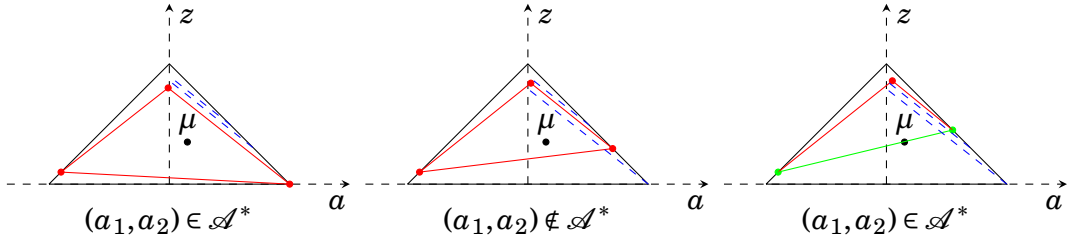
	
	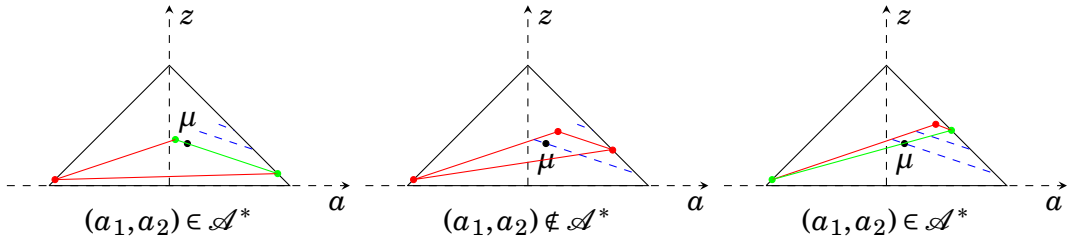
\begin{figure}[ht!]
		\centering
		\begin{tikzpicture}[scale=0.8]
		\draw (-2,0) -- (2,0) -- (0,2)--(-2,0);
		\node [above] at (0.3,0.7) {$\mu$};
		\node [below] at (2.8,0) {$a$};
		\node [right] at (0,2.8) {$z$};
		\node [below] at (0,-0.2) {$(a_1,a_2)\in \mathcal A^*$};
		\draw[>=stealth, dashed,->]  (0,0) -- (0,3);
		\draw[>=stealth, dashed,->]  (-2.7,0) -- (3,0);
		\draw[fill] (0.3,0.7) circle [radius=0.05];
		\draw[fill,green] (1.8,0.2) circle [radius=0.05];
		\draw[fill,green] (0.1,0.766666666) circle [radius=0.05];
		\draw[fill,red] (-1.9,0.1) circle [radius=0.05];
		\draw [red] (-1.9,0.1) -- (0.1,0.766666666) ;
		\draw [red] (1.8,0.2)-- (-1.9,0.1);
		\draw [green]  (0.1,0.766666666) -- (1.8,0.2);
		\draw [dashed,blue] (0.5,0.9) -- (1.4,0.6);
		\draw [dashed,blue] (0.82,1.02) --  (1.08,0.92);
		\end{tikzpicture}
		\begin{tikzpicture}[scale=0.8]
		\draw (-2,0) -- (2,0) -- (0,2)--(-2,0);
		\node [below] at (0.3,0.7) {$\mu$};
		\node [below] at (2.8,0) {$a$};
		\node [right] at (0,2.8) {$z$};
		\node [below] at (0,-0.2) {$(a_1,a_2)\notin \mathcal A^*$};
		\draw[>=stealth, dashed,->]  (0,0) -- (0,3);
		\draw[>=stealth, dashed,->]  (-2.7,0) -- (3,0);
		\draw[fill] (0.3,0.7) circle [radius=0.05];
		\draw[fill,red] (0.5,0.9) circle [radius=0.05];
		\draw[fill,red] (-1.9,0.1) circle [radius=0.05];
		\draw[fill,red] (1.4086956,0.5913039) circle [radius=0.05];
		\draw [red] (-1.9,0.1) -- (0.5,0.9) -- (1.4,0.6)-- (-1.9,0.1);
		\draw [dashed,blue]  (0.1,0.766666666) -- (1.8,0.2);
		\draw [dashed,blue] (0.82,1.02) --  (1.08,0.92);
		\end{tikzpicture}
		\begin{tikzpicture}[scale=0.8]
		\draw (-2,0) -- (2,0) -- (0,2)--(-2,0);
		\node [below] at (0.3,0.7) {$\mu$};
		\node [below] at (2.8,0) {$a$};
		\node [right] at (0,2.8) {$z$};
		\node [below] at (0,-0.2) {$(a_1,a_2)\in \mathcal A^*$};
		\draw[>=stealth, dashed,->]  (0,0) -- (0,3);
		\draw[>=stealth, dashed,->]  (-2.7,0) -- (3,0);
		\draw[fill] (0.3,0.7) circle [radius=0.05];
		\draw[fill,red] (0.82,1.02) circle [radius=0.05];
		\draw[fill,green] (-1.9,0.1) circle [radius=0.05];
		\draw[fill,green] (1.08,0.92) circle [radius=0.05];
		\draw [red] (-1.9,0.1) -- (0.82,1.02) --  (1.08,0.92);
		\draw [green] (-1.9,0.1) --   (1.08,0.92);
		\draw [dashed,blue]  (0.1,0.766666666) -- (1.8,0.2);
		\draw [dashed,blue] (0.5,0.9) -- (1.4,0.6);
		\end{tikzpicture}
		\caption{When $s\leq\frac{\mu_0}{1-a_\mu}$, for a fixed $a_1$, varying $a_2$ such that $(a_1,a_2)\in \mathcal A.$}
		\label{fig:smallsvaryx2}
	\end{figure}
	
	\hspace{-0.6cm}\emph{Proof of Proposition \ref{prop:optimal}. }
		We will find a nonredundant optimal attention outcome, which exists by Lemma \ref{thm:finite}. 
		Note that a nonredundant $p$ is an optimal attention outcome, if and only if it is IC and principal-optimal, if and only if $(p,p)$ is a solution to \ref{eqn:designer}.
		
		If $\kappa\leq \kappa_1=\frac{1}{2}$, then $s^*(\kappa)\geq1$. Note that the support $\supp(p^F)$ of the full disclosure policy is $\{\nu_{(-1,0)},\nu_{(0,1)},\nu_{(1,0)}\}$.
		So by Proposition \ref{prop:IC} and condition \ref{eqn:lineIC3states}, $p^F$ is IC. Since $p^F\succeq^B q$ for all $q\in \mathcal{R}(\mu)$ and $U_P(\nu)$ is convex, by Jensen's inequality $\left(p^F,p^F\right)$ is a solution to \ref{eqn:designer}, so $p^F$ is an optimal attention outcome, and part (\ref{optimal1}) of Proposition \ref{prop:optimal} follows.
		
		If $\kappa>\kappa_4=2$, by equation \ref{eqn:lineIC3states} any nonredundant information policy with more than one messages is not IC. Therefore, the ``no information'' policy $p^N$ is the only nonredundant IC policy, and so is the solution to \ref{eqn:designer1}. Then by Lemma \ref{thm:finite}, $\left(p^N,p^N\right)$ is a solution to \ref{eqn:designer}, so $p^N$ is an optimal attention outcome, and part (\ref{optimal5}) of Proposition \ref{prop:optimal} follows.
		
		If $\kappa=\kappa_4=2$, then $s^*(\kappa)=0$.  Given equation \ref{eqn:lineIC3states}, then, any nonredundant $p\in\RP(\mu)$ must have $p\{\nu_{(a,\z)}:\ (a,\z)\in B, \ \z=\mu_0\}$.  But $p^O$ is IC too, and $p^O\succeq^B p$ for such $p$.  As $U_P$ is convex, it follows that $p^O$ is an optimal attention outcome.  For this value of $\kappa$, the content of part (\ref{optimal4}) of Proposition \ref{prop:optimal} follows.
		
		Henceforth, we consider the remaining case that $\kappa \in (\kappa_1, \kappa_4)=\left(\tfrac12, 2\right)$, collectively covering parts (\ref{optimal2}), (\ref{optimal3}), and (\ref{optimal4})---excluding the special case of $\kappa=\kappa_4$---of the proposition.  Here, $s:= s^*(\kappa)\in (0,1)$.  In this case, Lemma \ref{lem:binorsym} applies, telling us that there is an optimal attention outcome $p^*$ that has one supported belief on each of $\lt$ and $\rt$, and is either ternary with support containing $\nu_{(1,0)}$ (with slopes between adjacent beliefs equal to $s$) or binary. 
		\begin{itemize}
			\item If $p^*$ is binary, then $p^*$ has support $\{\nu_{(a_1, \z_1)},\ \nu_{(a_2, \z_2)}\}$ for some distinct $(a_1,\z_1)\in\lt$ and $(a_2,\z_2)\in\rt$.  Let $\tilde s:= \tfrac{\z_2-\z_1}{a_2-a_1}$.  The fact that $(a_1,\z_1)\in\lt$ and $(a_2,\z_2)\in\rt$ implies that $\tilde{s}\in \left[-\frac{\mu_0}{1-a_\mu},\frac{\mu_0}{1+a_\mu}\right]$. We also know that $\tilde s\in [-s,s]$ by Lemma \ref{lem:optimalbinary}. 
			
			Given such $\tilde{s}$, 
			\begin{align*}
			(a_1,\z_1)&=\left(-\frac{1-\mu_0+\tilde{s}a_\mu}{1-\tilde{s}},\frac{\mu_0-\tilde{s}(1+a_\mu)}{1-\tilde{s}}\right)\text{ and }\\
			(a_2,\z_2)&=\left(\frac{1-\mu_0+\tilde{s}a_\mu}{1+\tilde{s}},\frac{\mu_0+\tilde{s}(1-a_\mu)}{1+\tilde{s}}\right)
			\end{align*}
			are the unique intersections of $\lt$ and $\rt$, respectively, with the line of slope $\tilde{s}$ through $(a_\mu,\mu_0)$. Next, Bayes-plausibility tells us that
			\begin{align*}
			p^*\{\nu_{(a_1, \z_1)}\}= \frac{1-\tilde{s}}{2}\frac{1-\mu_0-a_\mu}{1-\mu_0+\tilde{s}a_\mu}\\
			p^*\{\nu_{(a_2, \z_2)}\}= \frac{1+\tilde{s}}{2}\frac{1-\mu_0+a_\mu}{1-\mu_0+\tilde{s}a_\mu}
			\end{align*}

			Therefore, the principal's objective can be written as (appealing to Observation \ref{obs:action}) 
			\begin{align}\label{eqn:binaryfinal}
			\int_\DT U_P \ddd p^* &= \mathbb V_{\mu\sim p^*} \left[a^*(\nu)\right] - \mathbb V_{\theta\sim\mu}(\theta) \nonumber\\
			&= \int_\DT \left(a^*(\nu)-a_\mu\right)^2 \ddd p^*(\nu) - \mathbb V_{\theta\sim\mu}(\theta)\nonumber \\
			&= \tfrac{1-\tilde{s}}{2}\tfrac{1-\mu_0-a_\mu}{1-\mu_0+\tilde{s}a_\mu} \left(-\tfrac{1-\mu_0+a_\mu}{1-\tilde s}\right)^2 +  \tfrac{1+\tilde{s}}{2}\tfrac{1-\mu_0+a_\mu}{1-\mu_0+\tilde{s}a_\mu}\left(\tfrac{1-\mu_0-a_\mu}{1+\tilde s}\right)^2 - \mathbb V_{\theta\sim\mu}(\theta) \nonumber\\
			&= \tfrac{(1-\mu_0+a_\mu)(1-\mu_0+a_\mu)}{2(1-\mu_0+\tilde{s}a_\mu)} \left( \tfrac{1-\mu_0+a_\mu}{1-\tilde s} + \tfrac{1-\mu_0-a_\mu}{1+\tilde s} \right) - \mathbb V_{\theta\sim\mu}(\theta) \nonumber\\			
			&= \frac{(1-\mu_0)^2-a_\mu^2}{1-\tilde s^2} - \mathbb V_{\theta\sim\mu}(\theta).
			\end{align}

			As $0\leq a_\mu< 1-\mu_0$ (because $(a_\mu,\mu_0)\in \mbox{int}(\B)$), the above objective is strictly increasing in $|\tilde s|$, so $p^*$ being an optimal attention outcome implies that $|\tilde s| = \min\{s,\frac{\mu_0}{1-a_\mu}\}$.  
			
			\item If there is no binary optimal attention outcome, then $p^*$ is ternary with support $\{\nu_{(a_1, \z_1)},\ \nu_{(a_2, \z_2)},\ \nu_{(1, 0)}\}$, $(a_1,\z_1)\in \lt$ and $\frac{\z_2-\z_1}{a_2-a_1}=s=\frac{z_2}{1-a_2}$. By the last part of Lemma \ref{lem:binorsym}, this is only possible when $s>\frac{\mu_0}{1-a_\mu}$. Recall that 
			\begin{align*}
			p_2(a_1,a_2)&=\frac{a_1(a_1+1-\mu_0)(1-s)+a_2(2a_1+1-\mu_0-a_\mu)s}{-s(a_2-a_1)\left[a_1\left(1-s\right)+a_2\left(1+s\right)\right]}, \\
			t_3(a_1,a_2)&=(-a_1-\tfrac{2s}{1-s}a_2, \ 1+a_1+\tfrac{2s}{1-s}a_2).
			\end{align*} 
			So, that $t_3=(1,0)$ and $p_2(a_1,a_2)\geq 0$ imply that $a_2=-\frac{1-s}{2s}(1+a_1)$ and $a_1\in \left[-1,-\frac{1-\mu_0-a_\mu}{1-\mu_0+a_\mu}\right]$; and moreover, any such pair has $(a_1,a_2)\in\mathcal A$. Substituting into \ref{eqn:affine} yields
			\begin{align}\label{eqn:ternaryfinal}
			\int_\DT U_P \ddd p^* + \mathbb V_{\theta\sim\mu}(\theta)&=-a_\mu^2+a_1^2-\frac{a_1^2+(1-\mu_0)a_1}{s}+\left(2a_1+\frac{1-\mu_0}{1-s}+\frac{2s-1}{1-s}a_\mu\right)\frac{1-s}{2s}(1+a_1)\nonumber\\
			&=-a_\mu^2+\frac{1}{2s}\left[a_1(1-a_\mu+\mu_0-2s(1-a_\mu))+1-\mu_0-a_\mu+2sa_\mu\right].
			\end{align}
			Note that the above objective function is affine in $a_1$. Note also that when $a_1=-\frac{1-\mu_0-a_\mu}{1-\mu_0+a_\mu}$, $p_2(a_1,a_2)= 0$, so that the binary policy with slope $-\frac{\mu_0}{1-a_\mu}$ is obtained. Since, by hypothesis, no binary policy is an optimal attention outcome, $a_1$ must be optimally set to $-1$, i.e. $s> \frac{1-a_\mu+\mu_0}{2(1-a_\mu)}$.
		\end{itemize}

		Now, we consider various subcases for the value of $\kappa$, and find an optimal policy using payoffs computed in \ref{eqn:binaryfinal} and \ref{eqn:ternaryfinal}.  
		\begin{itemize}
			\item First, suppose $\kappa \in (\kappa_1,\kappa_2]$ so that $ \frac{1-a_\mu+\mu_0}{2(1-a_\mu)}\leq s<1$. In \ref{eqn:ternaryfinal}, it is optimal to set $a_1=-1$ because $s\geq \frac{1-a_\mu+\mu_0}{2(1-a_\mu)}$. But since setting $a_1=-\frac{1-\mu_0-a_\mu}{1-\mu_0+a_\mu}$ results in an optimal binary policy (i.e. a binary policy that maximizes the principal's utility among all IC binary policies),\footnote{Note that now $\min\{s,\frac{\mu_0}{1-a_\mu}\}=\frac{\mu_0}{1-a_\mu}$, and the resulting binary policy by setting $a_1=-\frac{1-\mu_0-a_\mu}{1-\mu_0+a_\mu}$ has slope $-\frac{\mu_0}{1-a_\mu}$.} Lemma \ref{lem:binorsym} then implies that the resulting ternary policy by setting $a_1=-1$ is an optimal attention outcome. This proves part (\ref{optimal2}) of the proposition. 
			\item Next, suppose $\kappa \in [\kappa_2,\kappa_3]$, so that $\frac{\mu_0}{1-a_\mu}\leq s\leq \frac{1-a_\mu+\mu_0}{2(1-a_\mu)}$.  In \ref{eqn:ternaryfinal}, it is optimal to set $a_1=-\frac{1-\mu_0-a_\mu}{1-\mu_0+a_\mu}$ because $s\leq \frac{1-a_\mu+\mu_0}{2(1-a_\mu)}$, which, as noted before, results in a binary policy. This means that any ternary policy with $\nu_{(1,0)}$ in the support (and critical slopes) is dominated by a binary policy. Then, by Lemma \ref{lem:binorsym}, we know that an optimal binary policy is optimal. That is, the binary policy with slope $|\tilde{s}|=\min\{s,\frac{\mu_0}{1-a_\mu}\}=\frac{\mu_0}{1-a_\mu}$ is an optimal attention outcome. This proves the part (\ref{optimal3}) of the proposition.
			\item Finally, suppose $\kappa \in(\kappa_3,\kappa_4)$ so that $0< s<\frac{\mu_0}{1-a_\mu}$.  Then by Lemma \ref{lem:binorsym}, we know that an optimal binary policy is optimal.\footnote{Note that the last part of Lemma \ref{lem:binorsym} says that  policies of the second form (ternary policies with $\nu_{(1,0)}$ in the support and critical slopes) do not exist when $s<\frac{\mu_0}{1-a_\mu}$, so we can only focus on binary policies.} That is, the binary policy with slope $|\tilde{s}|=\min\{s,\frac{\mu_0}{1-a_\mu}\}=s$ is an optimal attention outcome. This proves the remainder of part (\ref{optimal4}) of the proposition.
		\end{itemize}
		
		\medskip
		
		All that remains is the uniqueness result.  Let us show that there is a unique (up to reflection) optimal attention outcome whenever $\kappa\neq\kappa_2$.
		
		If $\kappa\leq\tfrac12$, then full information is IC, and it is the unique minimizer of $p\mapsto\int_{\DT}U_P\ddd p$ over $\RP(\mu)$, and so is the unique optimal attention outcome.  If $\kappa>2$, then no information is uniquely IC, and so is the unique optimal attention outcome.  If $\kappa=2$, then every IC information policy is supported on $\{\nu_{(a,\z)}:\ (a,\z)\in \B, \ \z=\mu_0\ \}$, a convex set over which $U_P$ is strictly convex; $p^O$ is therefore (being a unique $\succeq^B$-maximum over all such information policies) the unique optimal attention outcome.
		
		Now, we focus on the remaining case of $\tfrac12<\kappa<2$, and let $p^*$ be an arbitrary optimal attention outcome.
		
		Letting $\mathcal P^*$ be the set of nonredundant optimal attention outcome, Claims OA.7 and OA.9 of \cite{Lipnowski2020} tell us that $p^*\in \overline{\text{co}}(\mathcal P^*)$. 
		As $U_A$ is strictly concave on $\lt$ (since $s<1$) and $p^*$ is IC, it must be that the positive measures $p^*(\cdot\cap \lt)$ and $p^*(\cdot\cap \rt)$ both have support of size at most one.  But, by Lemma \ref{lem:binorsym}, any element of $\overline{\text{co}}(\mathcal P^*)\setminus \mathcal P^*$ would violate this property.  Therefore $p^*\in\mathcal P^*$, i.e. it is nonredundant. 
		
		We now show that $p^*$ is unique (up to reflection) unless $\kappa=\kappa_2$. To this end, we need to rule out both multiplicity within the class of policies of the form guaranteed by Lemma \ref{lem:binorsym}, and existence of an optimal nonredundant policy outside of that class. The first part of the current proof already shows that within that class, the optimal attention outcome is unique unless $\kappa=\kappa_2$. We now argue that, as long as $\kappa\neq \kappa_2$, no nonredundant policy outside that class can be an optimal attention outcome. Following the notation of Lemma \ref{lem:binorsym}'s proof, take any $(a_1,a_2)\in \mathcal{A} \setminus \mathcal{A}^*$, and assume, for a contradiction, that $p^{\mathcal{T}}_{t(a)}$ is an optimal attention outcome. As $\tilde a_2 \mapsto \int_\DT U_P\ddd p^{\mathcal{T}}_{t(a_1,\tilde a_2)}$ was shown to be affine, this can only be true if the affine function is constant, so that $p^{\mathcal{T}}_{t(a)}$ generates the same payoff to the principal as $p^{\mathcal{T}}_{t(a_1,\underline{a}_2)}$ and $p^{\mathcal{T}}_{t(a_1,\bar{a}_2)}$ do, where $(a_1,\underline{a}_2),(a_1,\bar{a}_2)\in\mathcal{A}^*$ with $\underline{a}_2<a_2<\bar{a}_2$. In particular, both $p^{\mathcal{T}}_{t(a_1,\underline{a}_2)}$ and $p^{\mathcal{T}}_{t(a_1,\bar{a}_2)}$ are also optimal attention outcomes.  Consider alternative cases of $\kappa \in (\tfrac12, 2)\setminus\{\kappa_2\} = (\kappa_1,\kappa_2)\cup (\kappa_2,\kappa_3)\cup [\kappa_3,\kappa_4)$.

		\begin{itemize}
			\item If $\kappa\in (\kappa_1,\kappa_2)\cup (\kappa_2,\kappa_3)$, then $p^{\mathcal{T}}_{t(a_1,\underline{a}_2)}$ is ternary and $p^{\mathcal{T}}_{t(a_1,\bar{a}_2)}$ is binary.  But, as we argued above, the optimal ternary attention outcome and the optimal binary attention outcome are strictly payoff ranked for $\kappa\neq\kappa_2$. This is a contradiction to $p^{\mathcal{T}}_{t(a_1,\underline{a}_2)}$ and $p^{\mathcal{T}}_{t(a_1,\bar{a}_2)}$ both being optimal attention outcomes.
						
			\item If $\kappa\in [\kappa_3,\kappa_4)$, then both $p^{\mathcal{T}}_{t(a_1,\underline{a}_2)}$ and $p^{\mathcal{T}}_{t(a_1,\bar{a}_2)}$ are binary, while one element in $\supp(p^{\mathcal{T}}_{t(a_1,\underline{a}_2)})$ is in $\mbox{int}(\B)$ (see the left panel of Figure \ref{fig:smallsvaryx2} for illustration). By Claim \ref{cl11}, $p^{\mathcal{T}}_{t(a_1,\underline{a}_2)}$ is not principal-optimal, a contradiction to $p^{\mathcal{T}}_{t(a_1,\underline{a}_2)}$ being an optimal attention outcome.{\scriptsize$\blacksquare$}
		\end{itemize}

\end{spacing}

\end{document}